\providecommand{\newmu}{\mu'}
\providecommand{\Ib}{I_b}
\providecommand{\MH}{M_H}
\providecommand{\NH}{N_H}
\providecommand{\IC}{I_C}
\providecommand{\ID}{I_D}
\providecommand{\Eps}{\eps}
\providecommand{\devfromone}{\eta}
\renewcommand{\paragraph}{%
	\@startsection{paragraph}{4}%
	{\z@}{1.2ex \@plus .5ex \@minus .1ex}{-.7em}%
	{\normalfont\normalsize\bfseries}%
}
\title{An Integer Interior Point Method for Min-Cost Flow Using Arc Contractions and Deletions}
\author{Ruben Becker$^{1,2,3}$ and 
Andreas Karrenbauer$^{1,2}$ and Kurt Mehlhorn $^{1}$\\[1mm]
  \normalsize{\texttt{firstname.lastname@mpi-inf.mpg.de}}\\[2mm]
  \normalsize{$^{1}$ Max Planck Institute for Informatics, Saarbr\"ucken, Germany} \\[1mm]
  \normalsize{$^{2}$ Max Planck Center for Visual Computing and Communication} \\[1mm]
  \normalsize{$^{3}$ Saarbr\"ucken Graduate School for Computer Science}
}
\date{}
\newcommand{\p}{\pi}
\newcommand{\f}{\varphi}
\newcommand{\supd}{\Delta s}
\newcommand{\xupd}{\Delta x}
\newcommand{\yupd}{\Delta y}
\DeclareMathOperator{\tcn}{tcn}
\newcommand{\vvw}{\acute}
\newcommand{\wvw}{\grave}
\DeclareMathOperator{\sign}{sign}
\DeclareMathOperator{\diag}{diag}
\DeclareMathOperator{\gap}{gap}
\DeclareMathOperator{\Exp}{E}
\let\eps\varepsilon
\newcommand{\ones}{\mathds{1}}
\newcommand{\NN}{\ensuremath{\mathbb{N}}}
\newcommand{\ZZ}{\ensuremath{\mathbb{Z}}}
\newcommand{\RR}{\ensuremath{\mathbb{R}}}
\DeclareMathOperator{\Iin}{in}
\newcommand{\din}{\ensuremath{\delta^{\Iin}}}
\DeclareMathOperator{\Oout}{out}
\newcommand{\dout}{\ensuremath{\delta^{\Oout}}}
\newtheorem{theorem}{Theorem} 
\newtheorem{lemma}{Lemma}
\newtheorem{definition}{Definition}
\begin{document}

\maketitle

\begin{abstract}
	We present an interior point method for the min-cost flow problem that uses arc contractions and deletions to steer clear from the boundary of the polytope when path-following methods come
	too close. We obtain a randomized algorithm running in expected
	$\tilde O( m^{3/2} )$ time that only visits integer lattice points in the
	vicinity of the central path of the polytope. This enables us to use integer arithmetic like
	classical combinatorial algorithms typically do. We provide explicit bounds on
	the size of the numbers that appear during all computations. By presenting an
	integer arithmetic interior point algorithm we
	avoid the tediousness of floating point error analysis and achieve a method that
	is guaranteed to be free of any numerical issues. We thereby eliminate one of the drawbacks of numerical methods in contrast to combinatorial min-cost flow algorithms that still yield the most efficient implementations in practice, despite their inferior worst-case time complexity.
\end{abstract}

\section{Introduction}\label{sec:intro}
The min-cost flow problem is one of the most general network flow problems
that can be solved in polynomial time. It generalizes the max-flow as well as
the shortest path problem. Moreover, it has many applications, since a lot of
real-world problems can be modeled as a min-cost flow problem. Non-surprisingly
there is a rich literature concerning the problem. The work can roughly be split
into two major lines. First, there is the line of combinatorial algorithms that starts with the seminal work of Ford and
Fulkerson~\cite{Ford:2010:FN:1942094} and continues with Edmonds and
Karp~\cite{DBLP:conf/aussois/EdmondsK01}, who gave the first polynomial time
algorithm for the problem. Further polynomial time algorithms are the strongly polynomial time algorithm of
Orlin~\cite{DBLP:conf/stoc/Orlin88} and the scaling
approaches~\cite{DBLP:conf/stoc/GoldbergT87,Goldberg:1990:FMC:92217.92225}
that peak in the double scaling technique by Ahuja et
al.~\cite{DBLP:journals/mp/AhujaGOT92} achieving a bound of
$O(nm\log \log U\cdot \log(nC))$, where $n$ denotes the number of nodes,
$m$ the number of arcs, $U$ the maximal capacity, and $C$ the maximal cost in the network.
The second line of work originates in the results of
Vaidya~\cite{DBLP:conf/focs/Vaidya89a} who was the first to specialize
interior point methods to network-flow problems. Lagging behind the
combinatorial methods at that time, this line of research has become extremely
fruitful in the last decade;
most prominently with the work of Daitch and
Spielman~\cite{DBLP:conf/stoc/DaitchS08} who gave a randomized
$\tilde O(m^{3/2})$-time algorithm.\footnote{ The tilde indicates that
logarithmic factors in $n,U,C$ are hidden.} They were the first to use
nearly-linear time equation
solvers~\cite{DBLP:conf/stoc/SpielmanT04,DBLP:conf/focs/KoutisMP10,DBLP:conf/stoc/KelnerOSZ13} in the context of generalized min-cost flow problems for approximately solving systems of linear equations in each of $\tilde O(\sqrt{m})$ iterations, thereby, for the first time, achieving sub-quadratic bounds in sparse graphs and under the similarity assumption, i.e., when the input data is polynomially bounded in $n$. The same running time (up to log factors) was
achieved in~\cite{DBLP:conf/isaac/BeckerK14} for the classical min-cost flow problem with a
simple potential reduction algorithm. Even more recently,
the breakthrough result of Lee and Sidford~\cite{DBLP:conf/focs/LeeS14}
for solving general linear programs in rank-many iterations yields a very
intricate, randomized, interior point method of complexity $\tilde O(\sqrt{n}m)$ for the min-cost flow problem. Most recently, Cohen et al.~developed a new framework to analyze interior point methods~\cite{DBLP:journals/corr/CohenMSV16}. They obtain a bound of $\tilde O(m^{3/2})$ for the case of min-cost flow problems with the restriction that there are only unit capacities.

Although achieving the currently best-known bounds in theory, the interior point methods lack
behind the combinatorial methods in practice, see for example~\cite{DBLP:conf/dimacs/GoldbergK91,DBLP:journals/jal/Goldberg97,DBLP:conf/alenex/BeckerFK16} for efficient implementations of combinatorial algorithms and~\cite{DBLP:journals/corr/abs-1207-6381} for an extensive comparison of implementations.
We are not aware of any interior point method that was reported to be competitive with
the combinatorial approaches on real world problems. The reasons are many.
First, combinatorial algorithms are rather easy to implement and necessary subroutines are readily available in standard libraries, whereas the ingredients that are required to achieve the better theoretical results are rather involved and use complicated
subroutines such as low-stretch spanning trees~\cite{DBLP:conf/stoc/AbrahamN12} or spectral vertex sparsifiers~\cite{DBLP:journals/corr/LeePS15} that are
already difficult to implement efficiently on their own.
Second, interior point methods are formulated for real arithmetic and implementations require
high-precision floating point numbers in order to avoid numerical
instabilities, even if all input data is integer. In contrast to this, the combinatorial algorithms have the advantage that,
given an integral input, they get along with fixed-precision integer arithmetic,
which makes them very robust in practice. Note that, if floating-point-arithmetic is used, even combinatorial algorithms may suffer from numerical instability as was observed by Althaus and Mehlhorn in LEDA~\cite{DBLP:journals/ipl/AlthausM98}.
In this paper, we address the second issue of practical implementations of interior point methods decribed above by giving an interior point method
that uses integer arithmetic on numbers of polynomial length.
We proceed by defining the min-cost flow problem formally.

An instance of the min-cost flow problem is defined by a directed
graph $G=(V,A)$ with $|V|=n$ and $|A|=m$, node demands $b \in \ZZ^n$ with $\ones^T b = 0$,
arc costs $c\in \ZZ^m$ and arc capacities $u \in\NN^m$.
We define $\beta:=\gcd(b,u)$ and $\gamma:=\gcd(c)$, i.e.\ greatest common divisors of the entries of the respective vectors. Moreover, we use $U:=\max\{\|u\|_{\infty}/\beta,\|b\|_{1}/(2\beta)\}$ and $C:=\|c\|_{\infty}/\gamma$.
A flow $x\in \RR^m$ is called \emph{feasible}, if $0 \le x \le u$ and $x(\din(v)) - x(\dout(v)) = b_v$
for every $ v\in V$. \footnote{We write $f(S):= \sum_{a\in S} f_a$ for $S\subseteq A$ for any vector $f\in \RR^m$ and for a node $v\in V$, we use $\din(v)$ for the ingoing arcs of $v$ and $\dout(v)$ for its outgoing arcs, respectively.}
The task is to find the feasible flow $x^*$ of minimal cost, i.e., $c^T x^* \le c^Tx$ for all feasible flows $x$, or to assert that no feasible flow exists.
It is well-known, see, e.g.,~\cite{DBLP:books/daglib/0069809,DBLP:conf/isaac/BeckerK14}, that we can restrict ourselves to the problem being feasible, having only non-negative costs, and no capacity constraints. We will also justify the last assumption in Appendix~\ref{sec:initial}.

Assuming the problem to be uncapacitated, we will use the notation $(G,b,c)$ for a min-cost flow instance on the graph $G$, with demand vector $b$ and costs $c$, which can be written as a primal-dual linear programming pair
\begin{align}
	\label{form:inputproblem}
	\min\{c^Tx \; : \; A x = b \text{ and } x \ge 0\} =  \max\{b^Ty \; : \; A^T y + s = c \text{ and } s \ge 0\},
\end{align}
where $A\in \{-1,0,1\}^{n\times m}$ denotes the node-arc incidence matrix of $G$, i.e., $A$ contains a column $\alpha$ for every arc $(v,w)\in A$ with $\alpha_v = -1$, $\alpha_w = 1$ and $\alpha_i=0$ for all $i\notin\{v,w\}$.

Path following methods maintain a pair of
primal/dual feasible solutions $x$ and $(y,s)$ such that $x_a s_a \approx \mu$
for all $a \in A$ for a positive parameter $\mu$, which they drive to zero.
They thereby follow the so-called \emph{central path}, which is the set of all points satisfying $x_as_a=\mu$ for all $a\in A$.
As a consequence of $\mu$ approaching zero, the \emph{duality gap} $c^Tx - b^T y = x^T s \approx \mu m$ goes to zero.
It may however happen, that either $x_a$ or $s_a$ becomes tiny while the other becomes huge.
Such behavior is numerically unpleasant as it requires to simultaneously deal with huge and tiny numbers.

\subsection{Our Contribution}
In this paper, we resolve the issue of simultaneously having to deal with huge and tiny numbers in interior point methods for min-cost flow by only considering the arcs with sufficiently large values of $x_a$ and $s_a$.
That is, whenever $x_a$ becomes too small, we delete the arc $a$ from the network and whenever $s_a$ becomes too small,
we contract the arc $a$. We thereby obtain a reduced problem in form of a minor of the original network.
We do not revoke any of these operations until termination of our algorithm and show that a near-optimal solution for the reduced problem can be lifted efficiently to an optimal solution of the original network.
Arc deletion and contraction has been used in combinatorial optimization before~\cite{DBLP:journals/combinatorica/Tardos85,orlin1984genuinely}, but, to our knowledge,
not in combination with path-following methods for network flow problems.

For the reduced problem, we can show that one can stay sufficiently close to the central path by hopping over points on an integer lattice in the vicinity of the central path.
More precisely, we show that if the greatest common divisor (gcd) of the demands and capacities and the gcd of the costs are sufficiently large, we can perform all arithmetic operations on integers as typical for the practically superior combinatorial algorithms. The requirement on the gcds can be easily achieved by an initial scaling of the demands, capacities and costs. Such scaling comes at a very low price because the expected running time of $\tilde O( m^{3/2} )$ only depends logarithmically on these scaling factors. We remark that it appears obvious that integer arithmetic can be achieved by scaling with a huge number that is, say, doubly exponential in $m,n$. However, we show that scaling with numbers that are polynomial in $n$ is sufficient. Moreover, we manage to make the exact numerical requirements of our algorithm explicit by stating absolute bounds on the appearing numbers, without hiding anything in $O$-notation.

\begin{theorem}\label{thm:contribution}
	Optimal primal and dual solutions for a min-cost flow problem
	can be computed in expected $\tilde O( m^{3/2} )$ time using only integer arithmetic on numbers bounded by
	$2^{31}m^{10}U^2C^2$, where $U$ and $C$ are bounds on the maximum capacity and cost, respectively.
\end{theorem}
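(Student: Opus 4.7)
The plan is to assemble the theorem by layering three independent components: (i) an outer scaling that reduces an arbitrary integer instance to one where the gcd hypotheses needed for exact integer updates are satisfied, (ii) an inner path-following method that only visits integer lattice points in a small neighborhood of the central path, interleaved with arc contractions and deletions to keep $x_a$ and $s_a$ bounded away from zero, and (iii) a lifting step that rounds a near-optimal solution of the reduced minor back to an optimal solution of the original network. The running time bound and the explicit numerical bound are then read off by combining the per-iteration cost, the iteration count, and the worst-case size of the integers produced by the updates.

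First, I would apply the reduction described in the paragraph following~\eqref{form:inputproblem} to get an uncapacitated instance with nonnegative integer costs and $\ones^T b = 0$, using the initialization machinery promised in Appendix~\ref{sec:initial}; this gives an initial feasible primal-dual pair $(x,y,s)$ on (or near) the central path with a $\mu_0$ polynomial in $n,U,C$. Next, I would multiply $b$, $u$ and $c$ by polynomial scaling factors so that $\beta := \gcd(b,u)$ and $\gamma := \gcd(c)$ are large enough to make all lattice displacements used by the inner algorithm integer; because the per-iteration complexity and the iteration count depend only logarithmically on these scaling factors, the overall $\tilde O(m^{3/2})$ bound is preserved. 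Throughout, I would explicitly track how each of these scalings multiplies into the numerator of the final size bound, so that the absolute constant $2^{31}$ and the polynomial factor $m^{10}U^2C^2$ appear without any $O$-notation cheating.

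The inner loop is the heart of the argument. A standard short-step interior point method requires $\tilde O(\sqrt{m})$ iterations to drive $\mu$ down by a factor polynomial in the input size; each iteration solves a Laplacian system via a nearly-linear-time solver (Spielman-Teng / Kelner-Orecchia-Sidford-Zhu) in $\tilde O(m)$ time, giving the $\tilde O(m^{3/2})$ total. The departure from the textbook version is that updates are rounded onto a coarse integer lattice close to the central path: one must show that the rounded step still decreases the duality gap by the expected factor and still keeps $(x,y,s)$ in a valid neighborhood. When a coordinate $x_a$ shrinks below the scale that the lattice can represent, the arc $a$ is deleted; when $s_a$ does, it is contracted. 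I would prove that these operations produce a minor whose central path is the projection of the original, that the current lattice point remains near the central path of the minor, and that each operation happens only once per arc, so they contribute at most $O(m)$ amortized overhead. This part is the main obstacle: the error analysis must be tight enough to show that the combination of rounding, contraction, and deletion stays inside the Dikin-type neighborhood used to justify the Newton step, and simultaneously the coordinate-wise size bound must be maintained inductively against all of $b$, $c$, $u$, $\mu$, the primal and dual lattice points, and the Laplacian solver's output.

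Finally, once the inner loop terminates with $\mu$ small enough that the central-path pair is within a ball of radius less than one from an integer optimum of the reduced problem, I would recover an exact optimum of the reduced problem by a single rounding step justified by optimality of integer vertices and the standard scaling argument used by Tardos~\cite{DBLP:journals/combinatorica/Tardos85}. Then the deleted arcs carry zero flow and the contracted arcs are saturated at values forced by flow conservation, so a single pass over the arc-operation history reconstructs an integer primal-dual pair on the original graph; verifying complementary slackness certifies optimality. Combining the $\tilde O(\sqrt{m})$ iterations of $\tilde O(m)$ work with the logarithmic dependence on the scaling yields the expected $\tilde O(m^{3/2})$ runtime, while the explicit per-iteration bookkeeping yields the numerical bound $2^{31}m^{10}U^2C^2$ claimed in Theorem~\ref{thm:contribution}.
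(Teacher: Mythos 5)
Your overall architecture matches the paper's: scale up so that $\gcd(b)$ and $\gcd(c)$ are polynomially large, run an integer-lattice path-following method that deletes arcs with tiny $x_a$ and contracts arcs with tiny $s_a$, and then lift a near-optimal solution of the minor back to the original graph; the constant $2^{31}m^{10}U^2C^2$ indeed arises exactly as you suggest, by substituting the required $\beta$ and $\gamma$ into the bound of Theorem~\ref{thm:proxynumbersize}. However, your final recovery step has a genuine gap. You claim that once $\mu$ is small the iterate is ``within a ball of radius less than one from an integer optimum'' and can be fixed by ``a single rounding step,'' after which ``the contracted arcs are saturated at values forced by flow conservation, so a single pass over the arc-operation history reconstructs an integer primal-dual pair.'' Neither claim holds as stated. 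The termination condition is a duality-gap bound $\sum_{a\in A_H}x_as_a<(1-\eps)^2$ on the \emph{minor}, not proximity of the full iterate to an integral vertex, and converting such a point into an optimal basic solution requires the crossover procedure of~\cite{DBLP:conf/isaac/BeckerK14}: one builds a dual tree solution via nested cuts and then must still solve a transshipment problem in the admissible network (a Goldberg--Rao max-flow, which is where the $O(m^{3/2}\log\tfrac{n^2}{m}\log U)$ term comes from). The primal values on contracted arcs are \emph{not} determined by conservation alone --- a contracted component may contain many arcs among which the throughflow can be distributed in many ways --- so no single pass over the operation history suffices.

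A second, related omission: the iterates restricted to the minor are not feasible for $(G,b,c)$ but only for an $\eps$-perturbed instance $(G,\tilde b,\tilde c)$ with $\tilde b=b-AI_Dx$ and $\tilde c=c-I_Cs$, so the minor's central path is not simply ``the projection of the original'' as you assert. The paper closes this gap with Theorem~\ref{thm:trees}, which uses integrality of tree solutions together with $\eps<1$ to show that a spanning tree that is primal/dual feasible (hence optimal) for the perturbed instance is also optimal for the original one. Without this lemma, optimality of whatever you recover on the minor does not transfer to $(G,b,c)$. Your sketch of the scaling, the iteration count, the Laplacian-solver-based centering step, and the size bookkeeping is otherwise consistent with the paper.
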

Note that the running time of our algorithm matches the time bound of Daitch and Spielman~\cite{DBLP:conf/stoc/DaitchS08}, which dominates the known bounds for combinatorial algorithms in many settings.
For a better comparison, we remark that Daitch and Spielman~\cite{DBLP:conf/stoc/DaitchS08} give a bound on the bit length of the numbers appearing in their algorithms, as well. They state that the methods work on numbers of bit length at most $O(\log (n U/\eps))$, where $\eps=(12m^2U^3)^{-1}$ for solving the standard min-cost flow problem exactly \footnote{Note that in the case of generalized min-cost flow, the algorithms of Daitch and Spielman only give $\eps$-approximate solutions in general and thus the numerical requirements will also depend on the desired $\eps$. For the standard min-cost flow problem they show that an $\eps$ of the size as described above is sufficient.}. If the hidden constant in the $O$-notation is known, say, to be $k$, then their method could also be used with integer arithmetic on numbers bounded by $12^kn^km^{2k}U^{4k}$.
For robust implementations, it is advantageous to know the exact precision requirements, as an appropriate scaling can be performed in the beginning without having to worry about numerical issues or overflows in the course of the algorithm. Alternatively, one can still use adaptive precision, check the correctness of the output, and repeat the computation with higher precision, if necessary. However, this approach usually requires changes to the implementation as insufficient precision may lead to diverging behavior of sub-routines, see for example~\cite{DBLP:journals/comgeo/KettnerMPSY08}. Adding a timer is no solution either as it turns diverging behavior into worst-case behavior. However, such precautions are mandatory and the only choice, if the bounds are not known explicitly.

For the potential-reduction method from~\cite{DBLP:conf/isaac/BeckerK14}, we only gave a bound on the number of arithmetic operations and a strict bound on the size of the emerging numbers is non-trivial to obtain, if not even impossible. It is essential for our proof in this paper to exploit graph minors and to use a path-following method instead.
Furthermore, it is interesting to note that our lower bound on the value of primal variables does not depend on the costs, capacities, or demands in the network, but only on $n$ and $m$. This might be interesting in the context of strongly polynomial time algorithms for the capacitated min-cost flow problem, where the current record bound is $\tilde O(m^2)$, due to Orlin~\cite{DBLP:conf/stoc/Orlin88}.

We want to remark that we consider this work rather as a first important step in the direction of making numerical requirements of interior point methods for combinatorial problems apparent, than as giving the ultimate final answer to practitioners. We feel that our paper, despite the fact that it treats all numerical details, is still fairly elegant. This is due to the fact that we compute in integers, and by this, avoid the tediousness of floating point error analysis.

\paragraph*{Overview} At the beginning of Section~\ref{sec:pathfol}, we describe the high-level idea of the interior point method and explain how numerical issues come into play. In Subsection~\ref{subsec:minors}, we present our way of dealing with this problem, deletion and contraction of arcs, and treat some theoretical observations that allow us to pursue this way. We end this section with the concept of a so-called proxy. We then describe our algorithm in more detail and also give a pseudocode implementation of it, see Subsection~\ref{subsec:algorithm}. The interior points that the algorithm maintains are actually interior points of a slightly perturbed instance, we describe this in Subsection~\ref{subsec:perturbation}, where we will see that an optimal tree solution of the perturbed instance will be sufficient for obtaining an optimal tree solution for the original problem.
Subsection~\ref{subsec:crossover} describes how to obtain this tree solution, namely using the crossover algorithm from~\cite[Section 2]{DBLP:conf/isaac/BeckerK14}, see also~\cite{DBLP:conf/alenex/BeckerFK16} for more details on the crossover algorithm. We treat the centering step and the related numerical issues in Section~\ref{sec:centering step}. Finally, we conclude in Section~\ref{sec:conclusion}.

\section{Integer Interior Point Algorithm}\label{sec:pathfol}
Our method belongs to the class of path-following interior point methods. These methods maintain a pair of primal/dual feasible solutions $x,s \in \RR^m$ while more or less uniformly approaching the complementary slackness conditions. That is, the conditions are parameterized uniformly by a positive $\mu$, i.e., $x_a s_a = \mu$ for all $a \in A$, s.t.\ one obtains a pair of primal/dual optimal solutions in the limit $\mu \to 0$. It can be shown~\cite{DBLP:books/daglib/0094154} that the points satisfying theses conditions form a curve called \emph{central path}, which is contained in the interior of the primal/dual polyhedron. Moreover, moving along the central path with decreasing $\mu$ concurrently decreases the \emph{duality gap} $c^Tx - b^T y = x^T s = \mu m$.

As said above, path-following methods typically only maintain interior points that are sufficiently close to the central path. In this paper, we measure the closeness to the central path w.r.t.\ some $\mu >0$ by the relative deviation in the $1$-norm $||\sigma||_1$ where $\sigma \in \RR^m$ and $\sigma_a := \tfrac{x_a s_a}{\mu} - 1$. A sufficiently small deviation, e.g., $||\sigma||_1 \le \delta$ for some constant $\delta < 1$, guarantees that both $x_a$ and $s_a$ remain positive because $||\sigma||_1 \le \delta$ implies that $x_a s_a / \mu \ge 1 - \delta$ is well separated from $0$. However, this does not prohibit that either $x_a$ or $s_a$ becomes tiny while the other becomes huge. Such behavior is not only numerically unpleasant, it is also an obstacle for using fixed-precision integer arithmetic.

\subsection{Minors to the Rescue}\label{subsec:minors}

We deal with the issue mentioned above in a somewhat radical way: We only consider the arcs with sufficiently large values for $x_a$ and $s_a$. That is, whenever $x_a$ becomes too small, we delete the arc from the network and whenever $s_a$ becomes too small, we contract the arc $a$. We thereby obtain a minor of the original network. This is captured formally by the following definition.

\begin{definition}\label{def:minor}
	Let $G=(V,A)$ be a directed graph with $m$ arcs. For $\eps > 0$ and $x,s \in \RR^m_{\ge 0}$, we call the minor $H$ of $G$ obtained by deleting the arcs in $D := \{ a \in A: x_a < \tfrac{\eps}{m} \}$ and contracting the arcs in $C := \{ a \in A: s_a < \tfrac{\eps}{m} \}$ the $\eps$-$x$-$s$-minor of $G$.
\end{definition}

This approach might be surprising, but it has its foundation in classical min-cost flow theory~\cite[Section 10.6]{DBLP:books/daglib/0069809}.
We provide a brief reasoning for being self-contained. We first review a well-known useful fact from general LP duality.
\begin{lemma}\label{lem:cequivs}
	Let $\min\{c^Tx:Ax=b,x\ge 0\}$ be bounded and feasible and let $\max\{b^Ty:A^Ty+s=c,s\ge 0\}$ denote its dual linear program.
	\begin{itemize}
		\item A primal optimal solution $x^*$ is also an optimal solution of $\min\{\bar{s}^T x: Ax=b,x\ge 0\}$ for any dual feasible solution $\bar{y},\bar{s}$.
		\item A dual optimal solution $y^*,s^*$  is also an optimal solution of $\min\{\bar{x}^T s: A^Ty + s =c, s\ge 0\}$ for any primal feasible solution $\bar{x}$.
	\end{itemize}
\end{lemma}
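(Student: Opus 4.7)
My plan is to derive both statements from a single identity: for any primal feasible $x$ and any dual feasible $(y,s)$, the quantity $s^T x$ equals the duality gap $c^T x - b^T y$. This follows by a one-line manipulation:
\begin{equation*}
s^T x = (c - A^T y)^T x = c^T x - y^T (A x) = c^T x - b^T y,
\end{equation*}
where the last equality uses $A x = b$.

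For the first statement, I would fix an arbitrary dual feasible pair $(\bar y, \bar s)$ and rewrite the objective of the perturbed primal as
\begin{equation*}
\bar s^T x \;=\; c^T x - b^T \bar y \qquad \text{for every feasible } x,
\end{equation*}
so the two objectives differ only by the constant $b^T \bar y$. Hence $x^*$ minimizes $\bar s^T x$ over the primal feasible set iff it minimizes $c^T x$, which is given. The second statement is entirely symmetric: fix a primal feasible $\bar x$, observe that $\bar x^T s = c^T \bar x - b^T y$ for every dual feasible $(y,s)$, so minimizing $\bar x^T s$ is equivalent to maximizing $b^T y$, which $(y^*, s^*)$ does by assumption.

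There is no real obstacle here; the boundedness and feasibility hypotheses are only used implicitly to ensure that optimal primal and dual solutions exist so that the statements are non-vacuous, and strong duality itself is not invoked — only the weak identity $c^T x - b^T y = s^T x$ on feasible pairs. The proof is therefore a short, self-contained computation, and I would present it in exactly the two steps above.
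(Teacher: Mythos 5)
Your proof is correct, and for the first bullet it takes a genuinely different and more elementary route than the paper. The paper proves the first statement by constructing an explicit dual certificate for the perturbed program --- it sets $y' := y^* - \bar y$ and $s' := s^*$, checks $A^T y' + s' = \bar s$, and then invokes complementary slackness ${x^*}^T s^* = 0$ (hence strong duality for the original LP) to conclude that the duality gap of the pair $(x^*, (y',s'))$ is zero for the perturbed program. You instead observe that on the common feasible set $\{Ax=b,\ x\ge 0\}$ the objectives $\bar s^T x$ and $c^T x$ differ by the constant $b^T\bar y$, so their minimizers coincide; this needs only weak duality's underlying identity $s^Tx = c^Tx - b^Ty$ and never references a dual optimum at all. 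For the second bullet your argument and the paper's are essentially the same chain of identities ($\bar x^T s = c^T\bar x - b^T y$ turns minimizing $\bar x^T s$ into maximizing $b^T y$), with the paper phrasing it as a proof by contradiction. What your approach buys is uniformity (both bullets become the same one-line constant-shift observation) and a weaker hypothesis for the first bullet; what the paper's approach buys is an explicit feasible dual solution $(y',s')$ for the perturbed program with zero gap, which makes the optimality certificate concrete. Both are valid; your write-up is complete as stated.
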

\begin{proof}
	Let $x^*$ and $y^*,s^*$ be primal and dual feasible solutions, respectively.
	\begin{itemize}
		\item Let $\bar{y},\bar{s}$ be a dual feasible solution. Define $y' := y^*- \bar y$ and $s' := s^* = c - A^Ty^*$, then $A^T y' + s' = A^T y^* - A^T \bar{y} + s^* = c - A^T \bar{y} = \bar s$ and hence $y', s'$ are feasible for $\max\{b^Ty:A^Ty+s=\bar{s},s\ge 0\}$. Moreover, it holds that
		      ${x^*}^T s' = {x^*}^T s^* = 0$.
		      It follows that $x^*,y',s'$ are primal and dual optimal for $\min\{s^Tx':Ax'=b,x'\ge 0\}$ and $\max\{b^Ty':A^Ty'+s'=s,s'\ge 0\}$, respectively. This shows the claim.
		\item Let $\bar x$ be a primal feasible solution, i.e., $A\bar{x}=b$ holds. Suppose for contradiction that
		      \begin{align*}
		      	\bar{x}^T s^* & > \min\{\bar{x}^T s :A^Ty + s=c,s\ge 0\} = \min\{\bar{x}^T (c - A^Ty): A^Ty + s=c,s\ge 0\} \\
		      	              & = c^T \bar{x} - \max\{b^Ty: A^Ty + s=c,s\ge 0\} = c^T \bar{x} - b^T y^*
		      	= \bar{x}^T s^*.\tag*{\qedhere}
		      \end{align*}
	\end{itemize}
\end{proof}

The following lemma shows the rationale for deleting arcs with tiny flow $x$ and thus huge reduced costs $s$ in the vicinity of the central path (respectively, contracting arcs with tiny reduced costs and huge flow). \footnote{Note that this result can be extended to general integral linear programs, but we prefer to keep it simple for the sake of presentation.}
\begin{lemma}\label{lem:removecontract}
	Let $x,s \in \RR^m$ be primal/dual feasible solutions of $(G,b,c)$ with $b \in \ZZ^n$ and $c \in \ZZ^m$.
	\begin{itemize}
		\item If $s_a > x^Ts$ for some $a\in A$, then $x^*_a=0$ for every optimal solution $x^*$.
		\item If $x_a > x^Ts$ for some $a\in A$, then $s^*_a=0$ for every optimal solution $s^*$.
	\end{itemize}
\end{lemma}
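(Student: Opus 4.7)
The plan is to combine the two halves of Lemma~\ref{lem:cequivs} with integrality of vertices of the primal/dual polyhedra. For the first bullet, Lemma~\ref{lem:cequivs} applied with $\bar s := s$ says that every primal optimum $x^*$ also minimizes $s^T x'$ over $\{x' \ge 0 : Ax' = b\}$. Since the given $x$ is primal feasible, the value of this auxiliary LP is at most $s^T x = x^T s$, so every primal optimum satisfies $s^T x^* \le x^T s < s_a$.

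For a \emph{basic} optimum the conclusion $x^*_a = 0$ is immediate: the node--arc incidence matrix $A$ is totally unimodular and $b \in \ZZ^n$, so every vertex of the primal feasible polyhedron has integer entries; a basic optimum with $x^*_a \ge 1$ would yield $s^T x^* \ge s_a x^*_a \ge s_a > x^T s$, a contradiction. To extend this to an arbitrary (possibly non-basic) primal optimum I would consider the auxiliary LP $\max\{x_a : Ax = b, c^T x = c^T x^*, x \ge 0\}$. Its recession cone is $\{r : Ar = 0, c^T r = 0, r \ge 0\}$, and any such $r$ satisfies $s^T r = c^T r - y^T(Ar) = 0$ for the given dual feasible $(y, s)$; combined with $s_a > 0$ and $r \ge 0$ this forces $r_a = 0$. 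Hence the maximum is bounded and attained at a vertex of the primal optimal face, which is also a vertex of the primal feasible polyhedron (vertices of a face are vertices of the polyhedron), hence integer with $a$-coordinate zero by the basic case. This proves $x^*_a = 0$ for every primal optimum.

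The second bullet is symmetric through the dual half of Lemma~\ref{lem:cequivs}: every dual optimum $s^*$ satisfies $x^T s^* \le x^T s < x_a$. Basic dual optima (in $s$-space, where the dual polyhedron is pointed) have integer $s^*$-coordinates -- the node potentials $y^*$ can be chosen to telescope integer arc costs along a basis-defining spanning tree, yielding $s^* = c - A^T y^* \in \ZZ^m$ -- so $s^*_a \ge 1$ would give $x^T s^* \ge x_a s^*_a \ge x_a$, a contradiction. For an arbitrary dual optimum the same bounded-maximum trick applies: any recession direction $s'$ of the dual optimal face is of the form $s' = -A^T y'$ with $b^T y' = 0$, whence $x^T s' = -(Ax)^T y' = -b^T y' = 0$; together with $x_a > 0$ this forces $s'_a = 0$.

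The main obstacle is the extension from basic to arbitrary optima, since the feasible polyhedra are unbounded (the problem is uncapacitated) and the optimal faces may carry nontrivial recession directions. The identities $s^T r = c^T r - y^T(Ar) = 0$ and $x^T s' = -b^T y' = 0$, which hold precisely on the respective optimal faces, neutralize this difficulty and reduce the proof to the clean integer-vertex step.
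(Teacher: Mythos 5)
Your proof is correct and its core is the same as the paper's: total unimodularity gives integral basic optima, and an integral basic optimum with $x^*_a \ge 1$ (resp.\ $s^*_a \ge 1$) contradicts the optimality for the auxiliary objective guaranteed by Lemma~\ref{lem:cequivs}, via $s^T x^* \ge s_a x^*_a \ge s_a > x^T s$. The only place you diverge is the passage from basic to arbitrary optima: the paper simply asserts that every optimal solution is a convex combination of basic ones, which is not literally true for the uncapacitated problem, since the optimal face may be unbounded (nonnegative zero-cost circulations are recession directions). Your observation that every recession direction $r$ of the primal optimal face satisfies $s^T r = c^T r - y^T(Ar) = 0$ and hence $r_a = 0$ whenever $s_a > 0$ (and dually $x^T s' = -b^T y' = 0$) closes exactly this gap, so your version is, if anything, the more complete argument.
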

\begin{proof}
	Because of the total unimodularity of the node-arc incidence matrix $A$ all basic solutions are integral. Thus, it suffices to show the claims for optimal integral basic solutions. All other optimal solutions are convex combinations of these.
	\begin{itemize}
		\item
		      Assume for contradiction that $x^*_a>0$ in an optimal integral solution, i.e., $x^*_a\ge 1$. Then ${x^*}^Ts \ge x^*_a s_a > x^Ts$, this shows that $x^*$ is not optimal for $\min\{s^Tx:Ax=b,x\ge 0\}$, contradicting Lemma~\ref{lem:cequivs}.
		\item Assume for contradiction that $s^*_a>0$ in an optimal integral solution, i.e., $s^*_a\ge 1$. Then
		      $x^T{s^*} \ge x_a s^*_a > x^Ts$, this shows that $s^*$ is not optimal for $\min\{x^Ts:A^Ty+s=c,s\ge 0\}$, contradicting Lemma~\ref{lem:cequivs}.\qedhere
	\end{itemize}
\end{proof}

Note that the central path condition $\sum_{a\in A} |\tfrac{x_as_a}{\mu}-1|\le \delta$ implies $x_as_a\in [(1-\delta)\mu, (1+\delta)\mu]$ for any arc $a\in A$. Now assume that for some arc $a\in A$, the value of $x_a$ becomes tiny, i.e., $x_a< \tfrac{\eps}{m}$.
Then
$
s_a\ge \tfrac{(1-\delta)\mu}{x_a}> \tfrac{(1-\delta)\mu m}{\eps} \ge (1+\delta)\mu m \ge x^Ts
$
follows for $\Eps := \tfrac{1-\delta}{1+\delta} < 1$. Thus Lemma~\ref{lem:removecontract} applies and we can conclude that $x^*_a=0$ in every optimum primal solution $x^*$. A completely symmetric approach shows that, if $s_a<\tfrac{\eps}{m}$ for an arc $a\in A$, then $s^*_a=0$ in every  optimal dual solution $s^*$.

The analyses of typical classical combinatorial algorithms that use arc deletion or arc contraction, respectively, are based on an argument that termination is guaranteed because at least one arc has to deleted or contracted after a certain amount of iterations and there are only $m$ arcs.
However, we have a different termination criterion that mainly depends on the duality gap. In~\cite{DBLP:conf/isaac/BeckerK14}, we showed that it suffices to compute a pair of primal/dual interior points with duality gap strictly less than $1$ to efficiently perform a crossover to an optimum integral basic feasible solution. In this paper, we extend this result and show that proximity to the optimum solution w.r.t.\ the minor suffices. To this end, we first formally define what proximity means in our context.

\begin{definition}\label{def:proxy}
	A pair of primal and dual feasible solutions $x,s$ of $(G,b,c)$ with $b \in \ZZ^n$ and $c \in \ZZ^m$ is called a proxy for the optimum of $(G,b,c)$ if $\sum_{a \in A_H} x_a s_a < (1-\eps)^2$ where $A_H$ is the arc-set of an $\eps$-$x$-$s$-minor of $G$ for some $\eps < 1$.
\end{definition}

\subsection{The Algorithm}\label{subsec:algorithm}
In Appendix~\ref{sec:initial}, we show how to construct, in linear time, initial interior points for an auxiliary instance with the same optimum that fulfill the central path condition $\|\sigma\|_1\le \delta$ for $\mu_0=3m\beta\gamma UC/\delta$ and thus can be used as input for Algorithm~\ref{alg:pathfollow}.
The idea is to decrease the current $\mu$ by roughly a factor of $1-\delta/\sqrt{m}$, more precisely, we set $\mu\leftarrow \lceil (1-\tau)\mu\rceil$, where $\tau=\delta/\sqrt{m}$, as long as we have not achieved a solution that is a proxy for the optimum of $(G,b,c)$ yet. Then, we restore closeness to the central path w.r.t.\ the new $\mu$ and the minor in a so-called \emph{centering step}.
Observe that the centering step and the absence of tiny $x_a$ and $s_a$ in the minor implies that their respective counterparts are bounded by $(1+\delta)\mu/\eps$. Hence, the magnitude of the initial $\mu$ yields an upper bound on the numbers that have to be considered in this algorithm.

Moreover, let $\beta := \gcd(b)$ and $\gamma := \gcd(c)$. Observe that the instance $(G,b/\beta,c/\gamma)$ still has integral demands and costs and it is combinatorially equivalent to the instance $(G,b,c)$. However, we do not scale the instance down, but we rather maintain this granularity to restrict the computations to integer arithmetic, i.e., we follow the central path by hopping from one point of the integer grid to an other one. We show that this is possible for sufficiently large $\beta$ and $\gamma$ and that the running time only scales logarithmically in $\beta$ and $\gamma$ so that up-scaling to meet this condition can be achieved at low cost. More precisely, suppose that a given input instance does not satisfy our assumption of having sufficiently large $\gcd$'s. Then, the approach is to scale down the instance by dividing by $\gcd(b)$ and $\gcd(c)$, respectively, and scaling them up by sufficiently large numbers that are given by the following theorem.

\begin{theorem}\label{thm:proxynumbersize}
	Let $(G,b,c)$ be a min-cost flow instance with $b \in \beta \ZZ^n$ and $c \in \gamma \ZZ^m$. There is a randomized algorithm to compute $x,s \in \ZZ^m$ s.t.\ $x/\beta,s/\gamma$ is a proxy for the optimum of $(G,b/\beta,c/\gamma)$ in expected $\tilde O( m^{3/2} )$ time using only integer arithmetic on numbers bounded by $2^8 m^{3} \gamma UC$,
	provided that $\beta \ge 2^8m^3$ and $\gamma \ge 2^{15}m^4\beta UC$.
\end{theorem}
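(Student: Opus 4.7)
My plan is to run Algorithm~\ref{alg:pathfollow} on the scaled instance $(G,b,c)$ and to verify the three claims of the theorem—termination with a proxy, overall $\tilde O(m^{3/2})$ time, and the uniform integer bound—simultaneously along the iteration. I first invoke the construction of Appendix~\ref{sec:initial} to obtain, in linear time, integer primal and dual feasible points $x,s \in \ZZ^m$, $y \in \ZZ^n$ that are $\delta$-centered w.r.t.\ $\mu_0 = 3m\beta\gamma UC/\delta$, i.e.\ $\|\sigma\|_1 \le \delta$. Each iteration then (i) replaces $\mu$ by $\lceil (1-\tau)\mu \rceil$ with $\tau = \delta/\sqrt{m}$, (ii) re-centers by the integer-arithmetic centering step of Section~\ref{sec:centering step}, and (iii) deletes every arc with $x_a < \eps\beta/m$ and contracts every arc with $s_a < \eps\gamma/m$; Lemma~\ref{lem:removecontract}, applied to the current scaled minor, ensures that no optimal solution is lost by these operations. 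The algorithm stops as soon as $\sum_{a \in A_H} x_a s_a < \beta\gamma(1-\eps)^2$, which after dividing through by $\beta$ and $\gamma$ is exactly the proxy condition of Definition~\ref{def:proxy} for $(G,b/\beta,c/\gamma)$.

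For the iteration count, closeness to the central path gives $\sum_{a\in A_H} x_a s_a \le m(1+\delta)\mu$, so the stopping criterion triggers once $\mu < \beta\gamma(1-\eps)^2/(m(1+\delta))$. Since $\mu$ is essentially multiplied by $1-\tau$ per iteration (the extra $+1$ from the ceiling contributes only a logarithmic overhead), the number of iterations is $O(\tau^{-1} \log(3m^3 UC/((1-\eps)^2\delta))) = \tilde O(\sqrt{m})$, and crucially this depends only logarithmically on $\beta$ and $\gamma$. Each iteration costs $\tilde O(m)$ work, dominated by an approximate solve of the symmetric diagonally dominant system arising inside the Newton step; this is where the randomization enters. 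The graph-side operations (updating $A_H$, merging supernodes on contraction, propagating potentials across identified endpoints) add only $O(m)$ per iteration, for a total of $\tilde O(m^{3/2})$.

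The main obstacle will be the uniform bound $2^8 m^3 \gamma UC$ on the intermediate integers together with the justification that the rounded Newton update still belongs to the integer lattice and keeps $\|\sigma\|_1 \le \delta$. For the bound, every surviving arc $a \in A_H$ satisfies $x_a \ge \eps\beta/m$ and $s_a \ge \eps\gamma/m$, while $x_a s_a \le (1+\delta)\mu \le (1+\delta)\mu_0 = 3m(1+\delta)\beta\gamma UC/\delta$; dividing gives $s_a \le 3m^2(1+\delta)\gamma UC/(\delta\eps)$ and the symmetric $x_a \le 3m^2(1+\delta)\beta UC/(\delta\eps)$. Fixing $\delta$ and $\eps$ as moderate constants makes the prefactors fit inside $2^8 m^3$, and the hypothesis $\gamma \ge 2^{15}m^4\beta UC$ makes the $s_a$ bound the binding one; the same envelope covers the auxiliary products and residuals of the centering step after a routine constant adjustment. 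The preconditions $\beta \ge 2^8 m^3$ and $\gamma \ge 2^{15}m^4\beta UC$ enter precisely at the rounding step: they guarantee that the integer grid on the primal and dual sides is fine enough relative to $\mu$ that a continuous Newton step can be replaced by its nearest integer counterpart with a distortion smaller than what the centering analysis of Section~\ref{sec:centering step} can absorb. Pinning down this rounding tolerance is the delicate part, but it amounts to balancing the integer-rounding perturbation against the standard quadratic convergence of the short-step Newton iteration.
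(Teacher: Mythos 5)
Your high-level plan is the paper's own: start from the $\delta$-centered integer points of Appendix~\ref{sec:initial} with $\mu_0=3m\beta\gamma UC/\delta$, run Algorithm~\ref{alg:pathfollow}, count $\tilde O(\sqrt m)$ outer iterations of $\tilde O(m)$ expected cost, and read the proxy condition off the stopping criterion after dividing by $\beta\gamma$. Those parts are fine. The gap sits exactly in the part you defer as ``a routine constant adjustment'' and ``the delicate part'', and it is the entire content of the theorem. First, the bound $2^8m^3\gamma UC$ does \emph{not} follow from the bounds on $x_a$ and $s_a$: those only give $2^6m^2\beta UC$ and $2^6m^2\gamma UC$ via \eqref{form:lowerupperbounds}. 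The binding quantities are the internals of Algorithm~\ref{alg:centering_step} --- the currents $\f$, the tree-induced voltages $\p$, and the cycle updates $\alpha$ --- which the paper bounds by $\|R\f\|_1\le\sqrt{\|r\|_1}\,\|\f\|_R$, with $\|\f\|_R^2\le\|\f^0\|_R^2\le 10\mu m$ coming from energy monotonicity of the cycle-toggling solver applied to the explicit integer initial current $\f^0_a=x_a-\lceil\mu'/s_a\rfloor$ (cf.~\eqref{form:phinormestimate}), and $\|r\|_1\le 2(1+\delta)\mu m^3/(\eps\beta)^2$ from $r_a\le 2s_a/x_a$ in \eqref{form:multroundingerror}. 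This yields a bound of order $\mu m^2/(\eps\beta)=\Theta(m^3\gamma UC)$ --- a factor $m$ beyond the $x,s$ envelope, and the source of the exponent $3$ in the statement. Without this electrical-energy estimate the claimed bound cannot be recovered.

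Second, describing the per-iteration work as an approximate SDD solve glosses over the point that a black-box solver would leave the integer lattice; the solver itself must run in integers, with rounded resistances $\lceil s_a/x_a\rceil$ and cycle updates $\alpha=\lceil\cdot\rfloor$. The hypotheses $\beta\ge 2^8m^3$ and $\gamma\ge 2^{15}m^4\beta UC$ are consumed in three concrete estimates that your proposal does not carry out: (i) $\gamma$ large relative to $\beta$ guarantees $s_a/x_a\ge1$, so the rounded resistances distort energies by at most a factor $2$; (ii) Lemma~\ref{lem:gapclose} shows that the accumulated rounding errors $D_a,E_a$ in the current sources and resistances contribute at most $O(\delta)$ to $\|\devfromone\|_1$ only because $\beta$ and $\gamma$ are this large, and Lemma~\ref{lem:invarsigma} then converts $\|\devfromone\|_1\le\delta/4$ into $\|\sigma'\|_1\le\delta$; (iii) rounding $\alpha$ to the nearest integer costs up to $\tfrac14\|r\|_1 m^2/\tcn(T)$ in expected energy decrease per step, and one needs $\|r\|_1m^2\le\gap$ (forcing $\beta\ge 2^5m^3/\delta$) to retain a $3/4$ fraction of the ideal decrease --- this is what makes the inner loop terminate in $\tilde O(m)$ expected iterations at all. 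Each of these is a separate quantitative argument that fails for small $\beta,\gamma$; none is absorbed by adjusting constants in the $x,s$ bounds.
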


\providecommand{\potshift}{\Delta}
\begin{figure}[ht]
	\begin{center}
		\begin{minipage}[t]{0.59\textwidth}
			\vspace{0pt}
			\begin{algorithm}[H] \DontPrintSemicolon \SetKwInOut{Input}{Input}\SetKwInOut{Output}{Output}
				\Input{$(G,b,c)$ with $\gcd(b) = \beta$ and $\gcd(c) = \gamma$, feasible interior $x,s \in \ZZ^m$, and $\mu>0$
					s.t.\ $\|\sigma\|_1:=\sum_{a\in A}|\tfrac{x_as_a}{\mu}-1|\le \delta$.}
				\Output{Vectors $x,s$ s.t.\ $x/\beta$ and $s/\gamma$ yield a proxy for the optimum of $(G,b/\beta,c/\gamma)$.}
				\Repeat{$\sum_{a\in A_H}x_as_a < (1-\eps)^2 \beta \gamma$}{
					$H$ = \texttt{minor}($G,x,s$)\;
					$\mu\leftarrow \lceil(1-\tau)\mu\rceil $, where $\tau = \delta/\sqrt{m}$\;
					$x,s$ = \texttt{centering\_step}($H,x,s,\mu$)\;
				}
				\Return{
					$x,s$
				}
				\caption{Integer Interior Point Algorithm\label{alg:pathfollow}}
			\end{algorithm}
		\end{minipage}%
		\hspace{0.00001\textwidth}
		\begin{minipage}[t]{0.38\textwidth}
			\vspace{0pt}
			\begin{algorithm}[H] \DontPrintSemicolon
				Let $\Eps:=\frac{1-\delta}{1+\delta}$.\;
				\For{$a\in A$}{
					\lIf{$x_a < \tfrac{\Eps\beta}{m}$}{
						remove $a$ from $G$
					}
					\lIf{$s_a < \tfrac{\Eps\gamma}{m}$}{
						contract $a$ in $G$
					}
				}
				\Return{$G$}
				\caption{\texttt{minor}($G,x,s$)}
				\label{minor}
			\end{algorithm}
		\end{minipage}
	\end{center}
\end{figure}

We sketch the algorithm that achieves the above theorem as pseudo-code in Algorithms~\ref{alg:pathfollow} and~\ref{minor}. 
The randomization only happens in the subroutine for the centering step, where fundamental cycles w.r.t.\ a low-stretch spanning tree are sampled. We discuss the centering step in detail in Section~\ref{sec:centering step}. For now it is only important that it maintains closeness to the central path w.r.t.\ the arcs in the minor. For the sake of presentation, we did not maintain the $y$-variables that can be used to restore the reduced costs on the arcs that have been deleted. But this can be easily achieved. Moreover, one can extend the cycles mentioned above from the minor through the contracted nodes to cycles in the original graph such that one can also keep track of the updates to maintain a feasible solution in the original graph. We thereby can obtain a proxy for the optimum of $(G,b,c)$. However, the variables of the arcs in the minor describe an interior point of a slightly perturbed instance.

\subsection{Perturbation}\label{subsec:perturbation}

Suppose, for example,  that we have an $\eps$-$x$-$s$-minor $H$ that differs from $G$ just by the deletion of a single arc $\hat{a}$. Hence, the set of nodes of $H$ coincides with the one of $G$.
However, projecting $x$ to $x_H$, $s$ to $s_H$, and $c$ to $c_H$ by removing the entry corresponding to $\hat{a}$ does not yield a pair of primal/dual feasible solutions $x_H,s_H$ for $(H,b,c_H)$, but it is rather a feasible solution to $(H, \tilde b, c_H)$ where $\tilde b = b - A e_{\hat a} e_{\hat a}^T x$ and we denote with $e_{\hat a}$ the vector of unity corresponding to the index $\hat a$. Note that the two demands vectors $b$ and $\tilde b$ are almost identical because $x_{\hat a}$ is tiny. We first define formally what we mean by almost identical instances and then draw the connection to minors.

\begin{definition}
	An instance $(G,\tilde b,\tilde c)$ is called \emph{$\eps$-perturbed} instance of $(G,b,c)$, if $\|b-\tilde b\|_1\le 2\eps$ and $\|c-\tilde c\|_1\le \eps$.
\end{definition}

We introduce the following notation to transform instances and solutions:
\begin{itemize}
	\item Denote with $\Ib\in\RR^{n_H\times n}$ the linear map that transforms $b$ into $b_H$, where for every node $v\in V_H$, $b_H(v)=\sum_{u\in C_v}b(u)$ for $C_v\subseteq V$ being all nodes in $G$ that get contracted to $u$.
	\item As above for any $a\in A$, we denote with $e_a\in\{0,1\}^m$ the vector of unity that has a $1$ in its $a$'th component and $0$ elsewhere, where the arcs are ordered as the columns of the node-arc-incidence matrix $A$. Similarly, with $e_v\in\{0,1\}^n$ we mean the vector of unity that has a $1$ in its $v$'th component, $0$ elsewhere and the considered ordering is as in the rows of $A$.
	\item Denote as $\MH\in \{0,1\}^{m_H\times m}$ the map with the $a$'th row being $e_a$ for all $a\in A_H$.
	\item Denote as $\NH\in \{0,1\}^{n_H\times n}$ the map with the $v$'th row being $e_v$ for all $v\in V_H$.
	\item For some $S \subseteq A$, let $I_S := \sum_{a \in S}e_a e_a^T$.
\end{itemize}

\begin{lemma}\label{lem:eps perturbation}
	Let $x,s$ be a pair of primal/dual feasible solutions of $(G,b,c)$ let $C:=\{a\in A_G: s_a < \tfrac{\eps}{m}\}$ and $D:=\{a\in A_G: x_a < \tfrac{\eps}{m}\}$ for some $0<\eps<1$. Then $(G,\tilde b,\tilde c)$ with $\tilde b = b - A \ID x$ and $\tilde c = c - \IC s$ is an $\epsilon$-perturbed instance of $(G,b,c)$.
\end{lemma}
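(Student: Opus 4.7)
The plan is to verify the two norm bounds in the definition of an $\eps$-perturbed instance directly by unfolding the definitions of $\tilde b$ and $\tilde c$ and exploiting the structure of the node-arc incidence matrix $A$ together with the size thresholds defining $C$ and $D$.

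First I would handle the cost perturbation, which is the simpler of the two. By definition $c - \tilde c = I_C s$, and $I_C s$ is the vector whose $a$-th entry equals $s_a$ for $a \in C$ and $0$ otherwise. Hence
\[
\|c - \tilde c\|_1 \;=\; \sum_{a \in C} s_a \;<\; |C| \cdot \frac{\eps}{m} \;\le\; \eps,
\]
using $s_a < \eps/m$ for every $a \in C$ and $|C| \le m$.

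Next I would treat the demand perturbation. Here $b - \tilde b = A I_D x$, where $I_D x$ is the vector that agrees with $x$ on $D$ and vanishes elsewhere. The key structural fact is that every column of $A$ has exactly one entry equal to $+1$ and one equal to $-1$, so for any vector $z \in \RR^m$ one has $\|A z\|_1 \le 2 \|z\|_1$. Applying this with $z = I_D x$ gives
\[
\|b - \tilde b\|_1 \;=\; \|A I_D x\|_1 \;\le\; 2 \sum_{a \in D} x_a \;<\; 2 |D| \cdot \frac{\eps}{m} \;\le\; 2 \eps,
\]
again because $x_a < \eps/m$ for $a \in D$ and $|D| \le m$. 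Combining both bounds yields that $(G, \tilde b, \tilde c)$ meets the definition of an $\eps$-perturbed instance of $(G, b, c)$.

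I do not anticipate any real obstacle here: the statement is essentially a bookkeeping observation that the total contribution of the deleted/contracted arcs to the demand and cost changes is small, because there are at most $m$ such arcs and each of the corresponding entries is strictly below $\eps/m$. The only point requiring any care is the factor $2$ in the demand bound, which comes from the fact that removing an arc's flow from $b$ affects two node-demands (its tail and its head), and this is precisely captured by the column structure of $A$.
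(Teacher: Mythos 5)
Your proof is correct and follows exactly the paper's own argument: bound $\|c-\tilde c\|_1=\|I_C s\|_1\le\eps$ directly from the threshold defining $C$, and $\|b-\tilde b\|_1=\|A I_D x\|_1\le 2\|I_D x\|_1\le 2\eps$ using that each column of $A$ has one $+1$ and one $-1$. You merely spell out the bookkeeping ($|C|,|D|\le m$ and each entry below $\eps/m$) that the paper leaves implicit.
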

\begin{proof}
	The proof is a simple calculation:
	\begin{align*}
		\|b-\tilde b\|_1 = \|A\ID x\|_1 \le 2 \|\ID x\|_{1} \le 2\eps \quad\text{ and }\quad
		\|c-\tilde c\|_1 = \|\IC c\|_1 \le \eps. \tag*{\qedhere}
	\end{align*}
\end{proof}

The following Theorem shows the equivalence of the original and the perturbed problem. For a min-cost flow instance $(G,b,c)$ and a given spanning tree $T$ of $G$, we call $x\in \RR^m$ a \emph{primal tree solution}, if $Ax=b$ and $x_a=0$ for all $a\in A\setminus T$ and we call $y,s\in\RR^{n+m}$ a \emph{dual tree solution} if $A^Ty + s =c$ and $s_a=0$ for all $a\in T$.

\begin{theorem}\label{thm:trees}
	Let $(G,\tilde b,\tilde c)$ be an $\eps$-perturbed instance of $(G,b,c)$  for some $0<\eps<1$ and let $T$ be a spanning tree of $G$.
	\begin{enumerate}[a)]
		\item If the unique tree solution w.r.t.\ $T$ in $(G,\tilde b,\tilde c)$ is primal feasible, then the unique primal tree solution w.r.t.\ $T$ in $(G,b,c)$ is feasible.
		\item If the unique tree solution w.r.t.\ $T$ in $(G,\tilde b,\tilde c)$ is dual feasible, then the unique dual \footnote{Here \emph{unique dual} refers only to the reduced costs because $A^T(y + t\ones) = A^T y$ for all $t \in \RR$.} tree solution w.r.t.\ $T$ in $(G,b,c)$ is feasible.
		\item If the spanning tree $T$ is optimal for $(G,\tilde b,\tilde c)$, then it is optimal for $(G,b,c)$.
	\end{enumerate}
\end{theorem}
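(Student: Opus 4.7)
The plan is to analyze all three parts through a single template: the primal and dual tree solutions of $(G,b,c)$ are obtained by solving linear systems whose coefficient matrices are (submatrices of) the node--arc incidence matrix of a tree, hence totally unimodular. Consequently the tree solutions for $(G,b,c)$ are integer, and small $\ell_1$-perturbations in $b$ or $c$ propagate to $\ell_\infty$-perturbations in the tree solutions that can be controlled and ultimately shown to be strictly below $1$. Integrality then forces non-negativity to carry over from the perturbed tree solution to the unperturbed one.

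For part (a), I would root $T$ at some node $r$ and, for every non-root node $v$, let $a_v$ be the tree arc between $v$ and its parent and $S_v$ the subtree rooted at $v$. Integrating the flow-conservation equations over $S_v$ yields the closed forms $x_{a_v}=\sigma_v\,b(S_v)$ and $\tilde x_{a_v}=\sigma_v\,\tilde b(S_v)$, with $\sigma_v\in\{\pm1\}$ depending only on the orientation of $a_v$. The delicate step is estimating $|(b-\tilde b)(S_v)|$: the raw bound $\|b-\tilde b\|_1\le 2\eps$ is a factor of two too weak on its own. However, since both $b$ and $\tilde b$ must satisfy $\ones^T b=\ones^T\tilde b=0$ for min-cost-flow feasibility, their difference is balanced, so it splits into positive and negative parts of equal $\ell_1$-mass; hence for any $S\subseteq V$ one obtains $|(b-\tilde b)(S)|\le \tfrac{1}{2}\|b-\tilde b\|_1\le\eps<1$. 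Combining this with the integrality of $x$ and the non-negativity of $\tilde x$ yields $x_{a_v}\ge 0$ on every tree arc, while non-tree arcs carry $x=0$ by definition.

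For part (b), the same template is applied on the dual side. Fixing $y_r=0$, the potentials along the tree satisfy $y_v=\sum_{a'\in\mathrm{path}(r,v)}\pm\, c_{a'}$ with signs determined by orientation, so that for each non-tree arc $a=(u,w)$ the reduced cost has the closed form $s_a=c_a-(y_w-y_u)$, a $\pm1$-signed combination of at most $m$ entries of $c$. The analogous expression holds for $(\tilde y,\tilde s)$, so $s_a-\tilde s_a$ is a $\pm1$-signed combination of entries of $c-\tilde c$ and is therefore bounded by $\|c-\tilde c\|_1\le\eps<1$ in absolute value. No balancing argument is needed here, so the asymmetric factor of two in the definition of $\eps$-perturbation does not bite on the dual side. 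Integrality of $s$ together with $\tilde s\ge 0$ then forces $s\ge 0$ on non-tree arcs, and tree arcs have $s=0$ by definition. Part (c) follows at once: tree solutions of any instance satisfy complementary slackness by construction ($x_a s_a=0$ arc-wise, since $x_a=0$ on non-tree arcs and $s_a=0$ on tree arcs), so LP-optimality of $T$ is equivalent to simultaneous primal and dual feasibility of its tree solutions; applying (a) and (b) transports this feasibility from the perturbed to the original instance.

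The hardest aspect, as already flagged, is obtaining the sharp subtree-sum bound $|(b-\tilde b)(S_v)|\le\eps$ rather than the naive $\le 2\eps$. If one only had the latter, integrality of $x$ would at best yield $x\ge-1$ rather than $x\ge 0$, and the whole scheme would collapse. Exploiting the balancing condition $\ones^T(b-\tilde b)=0$ --- which is automatic from the min-cost-flow feasibility requirements and is precisely the reason for the factor of two in the definition of $\eps$-perturbation --- is therefore the crucial structural ingredient on the primal side.
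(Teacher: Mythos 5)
Your proof is correct and follows essentially the same route as the paper's: compare the two tree solutions via the unique tree solution for the difference right-hand side, bound that difference in $\ell_\infty$ by $\eps<1$ (using the balancedness of $b-\tilde b$ to get the half-factor on the primal side, and the signed path-sum of $c-\tilde c$ on the dual side), and then invoke integrality of the unperturbed tree solution to upgrade $>-1$ to $\ge 0$. Your explicit subtree-sum justification of $|(b-\tilde b)(S_v)|\le\tfrac12\|b-\tilde b\|_1$ is a step the paper asserts without spelling out, but it is the same argument.
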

\begin{proof}
	\begin{enumerate}[a)]
		\item\label{item:treethmprimal} Let $x_T$ and $\tilde x_T$ denote the unique primal tree solutions corresponding to $T$ in $(G,b,c)$ and $(G,\tilde b,\tilde c)$, respectively, i.e., $A_Tx_T=b$ and $A_T\tilde x_T=\tilde b$ holds. Similarly let $d$ be the unique primal tree solution corresponding to $T$ for the right hand side $b-\tilde b$, i.e., $A_Td=b-\tilde b$. Note that in order to show feasibility of $x_T$, it suffices to show that $x_T\ge 0$.
		It follows that
		\begin{align*}
			A_T d = b - \tilde b = A_T(x_T-\tilde x_T)
		\end{align*}
		and thus by the uniqueness of $d$, we conclude $d=x_T-\tilde x_T$. It follows that $\|x_T-\tilde x_T\|_{\infty} = \|d\|_{\infty}\le \|b-\tilde b\|_1/2\le \eps$, using that $(G,\tilde b,\tilde c)$ is an $\eps$-perturbed instance. However, since $b\in\ZZ^n$ also $x_T\in\ZZ^m$ and since $\eps<1$ it follows that $\tilde x_T\ge 0$ implies $x_T\ge 0$.
		\item\label{item:treethmdual} Let $y_T,s_T$ and $\tilde y_T,\tilde s_T$ denote the unique dual tree solutions corresponding to $T$ in $(G,b,c)$ and $(G,\tilde b,\tilde c)$, respectively, i.e., $A^Ty_T + s_T=c$, $s_T(a)=0$ for $a\in T$ and $A^T\tilde y_T + \tilde s_T=\tilde c$, $\tilde s_T(a)=0$ for $a\in T$ holds.
		Similarly let $p,d$ be the unique dual tree solution corresponding to $T$ for the right hand side $c-\tilde c$, i.e., $A^Tp+d=c-\tilde c$.
		Note that in order to show feasibility of $y_T,s_T$, it suffices to show that $s_T\ge 0$. It follows that
		\begin{align*}
			A^Tp+d = c - \tilde c = A^T(y_T- \tilde y_T) + s_T - \tilde s_T
		\end{align*}
		and thus by the uniqueness of $d$, we conclude $d=s_T-\tilde s_T$. Furthermore for $a\in T$, we have $d(a)=0$ and for $a\in A\setminus T$, it holds that $|d_a|=|c_a-\tilde c_a-(p_w-p_v)|=|c_a-\tilde c_a- \sum_{b\in P(v,w)}c_b-\tilde c_b|\le \|c-\tilde c\|_1$. It follows that $\|s_T-\tilde s_T\|_{\infty} = \|d\|_{\infty} \le \|c-\tilde c\|_1\le \eps$, using that $(G,\tilde b,\tilde c)$ is an $\eps$-perturbed instance. However, since $c\in\ZZ^m$ also $s_T\in\ZZ^m$ and since $\eps<1$ it follows that $\tilde s_T\ge 0$ implies $s_T\ge 0$.
		\item Assume the tree $T$ is optimal for $(G,\tilde b,\tilde c)$, i.e., the tree is primal and dual feasible (complementary slackness). Together with part \ref{item:treethmprimal} and \ref{item:treethmdual}, we conclude that $T$ is also optimal for $(G,b,c)$. \qedhere
	\end{enumerate}
\end{proof}
We note that in the non-degenerate case, the reverse statements of \ref{item:treethmprimal} and \ref{item:treethmdual} hold as well. As our algorithm will output a tree solution to the perturbed problem, the above
Theorem is sufficient to show that the output tree is also an optimal tree for the original problem.

\subsection{Crossover}\label{subsec:crossover}
The crossover algorithm from~\cite[Section 2]{DBLP:conf/isaac/BeckerK14} takes a min-cost flow instance $(G,\tilde b,\tilde c)$ and primal/dual feasible points $\tilde x,(\tilde y,\tilde s)$. Given that $\tilde c$ is integral, it outputs an integral dual tree solution $y^*, s^*$ with $b^T y^* \ge b^T\tilde y$.
In a nutshell, the crossover algorithm works as follows. It iteratively constructs nested cuts $\{s\}=S^1\subset S^2 \subset \ldots \subset S^n=V$, starting with an arbitrary node $s$. During iteration $i$, the algorithm modifies the potentials $y$ along the cut $S^i$ ensuring that $b^Ty$ does not decrease. This is done by incrementing the potentials of all nodes in $S^i$, if $b(S^i)\ge 0$, and by decrementing them otherwise. The potentials are increased (decreased) until the reduced cost of at least one arc $a$ on the cut $S,V\setminus S$ becomes 0. Then, the node adjacent to $S^i$ through $a$ is being added to $S^i$. We note that the amount by which the potentials inside of $S^i$ are changed relatively to the nodes outside of $S^i$, is exactly equal to the reduced cost of the arc $a$.

\begin{theorem}
	Let $(G,b,c)$ be a min-cost flow instance with $b \in \beta \ZZ^n$ and $c \in \gamma \ZZ^m$. Given $x,s \in \ZZ^m$ s.t.\ $x/\beta,s/\gamma$ is a proxy for the optimum of $(G,b/\beta,c/\gamma)$, a pair of primal-dual optimum solutions $x^*,s^*$ of $(G,b,c)$ can be computed in $O( m^{3/2} \log \tfrac{n^2}{m} \log U )$ time.
\end{theorem}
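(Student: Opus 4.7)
My strategy is to apply the crossover algorithm of~\cite[Section 2]{DBLP:conf/isaac/BeckerK14} on a suitably perturbed and scaled instance, and then lift the resulting optimal spanning-tree basis back to $(G,b,c)$ via Theorem~\ref{thm:trees}. The first step is to set up the perturbed instance that matches the proxy. Let $D := \{a \in A : x_a < \eps\beta/m\}$ and $C := \{a \in A : s_a < \eps\gamma/m\}$ be the arc sets deleted respectively contracted in forming the $\eps$-$x$-$s$-minor $H$, so that $A_H = A \setminus (D \cup C)$, and define $\tilde b := b - A\ID x$ and $\tilde c := c - \IC s$. Since Algorithm~\ref{alg:pathfollow} maintains $x \in \beta\ZZ^m$ and $s \in \gamma\ZZ^m$, we have $\tilde b \in \beta\ZZ^n$ and $\tilde c \in \gamma\ZZ^m$, so $(G, \tilde b/\beta, \tilde c/\gamma)$ is a genuine integer min-cost flow instance. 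By Lemma~\ref{lem:eps perturbation} applied to $x/\beta, s/\gamma$ in $(G, b/\beta, c/\gamma)$, it is an $\eps$-perturbation of $(G, b/\beta, c/\gamma)$.

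Next I exhibit primal/dual interior points for the perturbed instance by zeroing out the tiny coordinates. Put $x' := x - \ID x$ and $s' := s - \IC s$; a short direct computation gives $Ax' = Ax - A\ID x = \tilde b$ and $A^Ty + s' = A^T y + s - \IC s = \tilde c$, where $y$ is the potential that certifies dual feasibility of $s$. After dividing by $\beta$ and $\gamma$ these points are primal/dual feasible for $(G, \tilde b/\beta, \tilde c/\gamma)$, and the duality gap collapses to exactly the proxy sum:
\begin{equation*}
  (x'/\beta)^T(s'/\gamma) \;=\; \frac{1}{\beta\gamma}\sum_{a \in A_H} x_a s_a \;<\; (1-\eps)^2 \;<\; 1.
\end{equation*}

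I then invoke the crossover algorithm on this perturbed integer instance, which in $O(m^{3/2}\log\tfrac{n^2}{m}\log U)$ time outputs an integral dual tree solution of objective at least $b^T\tilde y$. Because the duality gap is strictly less than $1$ and objective values of integer tree solutions are integers, the returned spanning tree $T$ must be optimal for $(G,\tilde b/\beta,\tilde c/\gamma)$. Theorem~\ref{thm:trees}(c) then lifts this optimality to $(G, b/\beta, c/\gamma)$, and since optimality of a spanning-tree basis is invariant under positive uniform scaling of $b$ and $c$, $T$ is also an optimal basis for $(G, b, c)$. Reading off $x^*$ and $s^*$ amounts to solving two triangular systems on $T$ in linear time, and the overall bound is dominated by the crossover invocation.

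The main point requiring care is twofold: verifying that $x\in\beta\ZZ^m$ and $s\in\gamma\ZZ^m$ so that the perturbed instance is an \emph{integer} LP (this is precisely where the ``gap below $1$ implies optimality'' argument kicks in), and checking that the perturbation identities $Ax'=\tilde b$, $A^Ty+s'=\tilde c$ and the collapse of the duality gap remain valid regardless of whether $D$ and $C$ are disjoint. The first hinges on the integrality discipline maintained in Algorithm~\ref{alg:pathfollow} and the centering step of Section~\ref{sec:centering step}; the second is a one-line computation using that $\ID$ and $\IC$ commute with the restriction to the respective coordinate subspaces. With these two points in hand, everything else is a black-box application of known results.
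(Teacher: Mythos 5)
Your overall skeleton (project onto the perturbed instance, run crossover, lift the tree via Theorem~\ref{thm:trees}) matches the paper, but two of your load-bearing steps are wrong.

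First, the claim that Algorithm~\ref{alg:pathfollow} maintains $x\in\beta\ZZ^m$ and $s\in\gamma\ZZ^m$, and hence that $(G,\tilde b/\beta,\tilde c/\gamma)$ is an \emph{integer} instance, is false. The algorithm maintains $x,s\in\ZZ^m$ only; the deleted arcs satisfy $0<x_a<\eps\beta/m<\beta$, so the entries of $\ID x/\beta$ lie strictly between $0$ and $1$ and $\tilde b/\beta$ is genuinely fractional (likewise $\tilde c/\gamma$). The paper says this explicitly in the remark after the proof (``the perturbation of the cost vector yields non-integral values''). Consequently your key inference --- ``duality gap $<1$ plus integrality of tree objectives implies the returned tree is optimal for the perturbed instance'' --- has no basis: there is no integrality in the perturbed instance to invoke. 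This is exactly why the definition of a proxy demands the stronger bound $\sum_{a\in A_H}\bar x_a\bar s_a<(1-\eps)^2$ rather than $<1$. The paper's argument is a cut argument in the \emph{perturbed} instance: if the admissible network of the crossover output were not primal feasible for $b$, there would be a cut $S$ with $b(S)>0$ across which all entering reduced costs are $\ge 1$, hence $\ge 1-\eps$ in the perturbed costs, and $\ones_S^T\tilde b\ge \ones_S^Tb-\eps\ge 1-\eps$; raising $\tilde y$ on $S$ would then improve the perturbed dual objective by at least $(1-\eps)^2$, contradicting the gap bound. You need some version of this $\eps$-slack argument; integrality of the perturbed data is not available.

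Second, even granting optimality of the tree $T$, you cannot ``read off $x^*$ by solving a triangular system on $T$'': the unique primal tree solution w.r.t.\ $T$ need not be nonnegative, and the crossover only certifies dual feasibility. One must solve a transshipment problem restricted to the admissible (zero-reduced-cost) network to find a complementary primal solution; this is done via a max-flow computation (Goldberg--Rao), and it is this step --- not the crossover, which runs in $O(m+n\log n)$ --- that accounts for the $O(m^{3/2}\log\tfrac{n^2}{m}\log U)$ bound in the statement. Your running-time accounting therefore also needs to be repaired.
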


\begin{proof}
	Let $\bar x=x/\beta$, $\bar s=s/\gamma$ as well as $\bar b=b/\beta$ and $\bar c=c/\gamma$.
	Consider the projection $\tilde x$ of $\bar x$ to the subspace with $\tilde x_a = 0$ for all $a \in D$ and the projection $\tilde s$ of $\bar s$ to the subspace with $\tilde s_a = 0$ for all $a \in C$.
	Note that these projections are primal/dual feasible for the $\eps$-perturbation $(G,\tilde b, \tilde c)$ of $(G,b,c)$, where $\tilde b = \bar b-AI_Dx$ and $\tilde c = \bar c - I_C s$ and their duality gap satisfies $\tilde x^T \tilde s = \sum_{a \in A_H} \bar x_a \bar s_a < (1-\eps)^2$. Given feasible $\bar s$, we can compute corresponding feasible $\bar y$ in linear time by propagation from the root to the leaves of an arbitrary spanning tree. 
	The same holds for $\tilde y$ and $\tilde s$. If we perform the crossover procedure with $\tilde y, \tilde s$ in $(G,\tilde b, \tilde c)$, we obtain a tree solution w.r.t.\ some spanning tree $T$. By Theorem~\ref{thm:trees}, this spanning tree also yields a dual feasible solution for $(G,b,c)$. Moreover, the two admissible networks are combinatorially the same. If the admissible network yields a primal feasible solution w.r.t.\ $b$, we obtain a pair of primal/dual feasible solutions satisfying complementary slackness and hence the sought optimum solutions $x^*,s^*$. Suppose the contrary for a contradiction. Then, there is a cut $S \subset V$ s.t.\ $b(S) > 0$ and there are no ingoing arcs with vanishing reduced costs, i.e., all ingoing arcs have reduced costs of at least $1$. Thus, $y_v$ could be safely increased by $1$ for all $v \in S$. Moreover, $\tilde y_v$ could be safely increased by $1-\eps$. Thus, the dual objective $\tilde b^T \tilde y$ would increase by at least $(1-\eps)^2$ because $\ones_S^T \tilde b = \ones_S^T b - \ones_S A \ID x \ge 1 - \eps$. But this contradicts $\tilde x^T \tilde s < (1-\eps)^2$. The crossover can be computed in $O( m + n \log n)$ and the transshipment problem in the admissible network that needs to be solved in order to obtain the corresponding primal solution can be solved by a max-flow computation in $O( m^{3/2} \log \tfrac{n^2}{m} \log U )$ using the algorithm of Goldberg and Rao~\cite{DBLP:conf/focs/GoldbergR97}.
\end{proof}
Note that the additional max-flow computation can be avoided, by using the isolation lemma~\cite{DBLP:journals/combinatorica/MulmuleyVV87} in order to make the optimal solution unique~\cite[Lemma 3.12]{DBLP:conf/stoc/DaitchS08}. However, the perturbation of the cost vector yields non-integral values and, hence, an additional scaling would be necessary in order to make the input integral. Furthermore, this strategy would introduce a new source of randomization.

\section{Centering Step}\label{sec:centering step}
In this section, we describe how to implement the centering step by using a variant of the electrical flow solver from~\cite{DBLP:conf/stoc/KelnerOSZ13}. We would like to stress the fact that we do not rely on the numerical stability analysis of~\cite{DBLP:conf/stoc/KelnerOSZ13} that would involve an additional scaling of the current sources, instead the current sources that are used here can be directly defined as integer values dependent on the current iterates $x,s$.
The centering step takes as input the minor $H$ of $G$, the variables $x,s$ and the parameter $\mu'=\lceil (1-\tau)\mu \rceil$, with $x$ and $s$ fulfilling $\|\sigma\|_1\le \delta$, where $\sigma_a = x_as_a/\mu-1$ for all $a\in A_H$. For simplicity, we will denote in this subsection the vectors $x$ and $s$ as vectors over $A_H$, instead of over $A$.
The goal of \texttt{centering\_step} is to update $x$ and $s$ to $x+\xupd, s+\supd$ such that $\|\sigma'\|_1\le \delta$ holds, where $\sigma'_a=\frac{x'_as'_a}{\mu'}-1$.
Note that $x,s$ uniquely defines feasible potentials $y$ whose update we denote with $\yupd$. Since we want the new iterates to be interior solutions again, we require
\begin{align}
	\label{form:eqfeasible}
	A_H \xupd =0, \; A_H^T\yupd+\supd=0 \text{ and } x+\xupd>0 \text{ as well as }s+\supd>0.
\end{align}
The idea is to compute $\xupd,\supd$ such that $(x+\xupd)(s+\supd)\approx \mu'$, more precisely we will guarantee that $\|\devfromone\|_1\le \delta/4$, where $\devfromone:=\frac{1}{\newmu} \left[S\xupd + X\supd +Xs \right] - \ones$, i.e., we omit the quadratic term $\xupd\supd$ and allow an approximate solution only. Here $X=\diag(x)$ and $S=\diag(s)$, respectively.
It is easy to see that computing such $\xupd, \supd$ satisfying~\eqref{form:eqfeasible} amounts to approximately solving a linear equation system. This approximate solution to the equation system in turn can be found by computing an approximate electrical flow in an electrical network with resistances $r_a=\lceil s_a/x_a\rceil$ and current sources $A_H\f^0$, where $\f^0_a=x_a- \lceil \mu'/s_a \rfloor$ for all $a\in A_H$. We will denote $R:=\diag(r)$. Note that, due to the assumption on the size of $\beta$ and $\gamma$, it holds that $s_a/x_a\ge 1$, see the proof of Lemma~\ref{lem:gapclose}.

We need some additional notation concerning electrical flows.\footnote{For a more detailed depiction of electrical flows see for example~\cite{bollobas1998modern,DBLP:conf/stoc/KelnerOSZ13}.} Let $\f$ be a given flow and let $T$ be a spanning tree of $H$. For any $a=(v,w)\in A_H\setminus T$, we define $C_a:=\{a\}\cup P(w,v)$, where $P(w,v)$ is the unique path in $T$ between $w$ and $v$, and $r(C_a) := \sum_{a'\in C_a} r_{a'}$ and $\tcn(T):=\sum_{A_H\setminus T} \tfrac{r(C_a)}{r_a}$ as the \emph{tree condition number of $T$}. The \emph{tree induced voltages} are defined as $\p_v:=\sum_{a\in P(v_0,v)} \f_a r_a$, where $v_0$ is an arbitrary root of the spanning tree $T$.

As previously mentioned, our method can be seen as an adaption of the electrical flow solver of Kelner et al.\ that was introduced in~\cite{DBLP:conf/stoc/KelnerOSZ13}. Kelner et al.\ showed how to compute an approximate electrical flow $\f$ and corresponding node voltages $\p$ in nearly linear time. Up to the choice of the termination criterion and the choice of the initial current our algorithm is identical, see the pseudo-code implementation for further details. In the algorithm, we need to compute a spanning tree and in order to gurantee fast convergence this spanning tree needs to be of low stretch. The current record bounds concerning low stretch spanning trees is achieved by the algorithm of Abraham and Neiman~\cite{DBLP:conf/stoc/AbrahamN12}.
They show how to compute a tree of stretch $O(m\log n\log\log n)$ in $O(m \log n \log \log n)$ time.

\begin{algorithm}[H] \DontPrintSemicolon
	\BlankLine
	Define $r_a:=\lceil \frac{s_a}{x_a}\rceil$,
	$\f^0_a:= x_a- \lceil \tfrac{(1-\tau)\mu}{s_a} \rfloor$ for $a\in A_H$.\;
	Let $T:=$spanning tree and
	$p_a:= \tfrac{r(C_a)}{\tcn(T)r_a}\forall a\in A_H\setminus T$\;
	$\f:=\f^0$, $s' = s$ and $x' = x$ \;
	\While{$ \sum_{a\in A_H} |x'_a s'_a - \mu| \ge \delta\mu$
		}{
		Sample $a\in A_H\setminus T$ according to $p$\;
		$\alpha:=\lceil \tfrac{ -\sum_{a'\in C_a} r_{a'} \f_{a'}}{r(C_a)} \rfloor $,
		$x' \leftarrow x' + \alpha c_a$, $\f \leftarrow \f+ \alpha c_a$\;
		occasionally: compute tree induced voltages $\p$ for $\f$, $s' := s-A_H^T\p$\;
	}
	\Return{$x',s'$}
	\caption{\texttt{centering\_step}($H,x,s,\mu$)}
	\label{alg:centering_step}
\end{algorithm}
\providecommand{\taupr}{\tau'}
Note that for our purposes here, we will assume the parameter $\delta$ to be set to $1/8$ and we will denote with $x,s,\mu$ the values of the variables at the beginning of the \texttt{centering\_step} and the update values we define by $\xupd := \f - \f^0$ and $\supd := -A_H^T\p$. We remark that, for our choice of $\delta=1/8$, evaluating the condition of the while-loop can be done easily with integer arithmetic (by comparing the two integral values resulting from multiplying the inequality by 8).

Moreover, define $\taupr$ by $(1-\taupr)\mu=\lceil (1-\tau)\mu\rceil=\newmu$ and recall that we set $\Eps:=\tfrac{1-\delta}{1+\delta}$.
Note that from the minorization and closeness to the central path, we obtain lower and upper bounds on $x$ and $s$ for arcs in the minor, respectively:
\begin{align}\label{form:lowerupperbounds}
	x_a \in \left[\frac{\Eps\beta}{m}, \frac{(1+\delta)\mu m}{\Eps \gamma}\right] \quad \text{ and }\quad
	s_a \in \left[\frac{\Eps\gamma}{m},\frac{(1+\delta)\mu m}{\Eps \beta}\right] \quad \text{ for all }a\in A_H.
\end{align}
We first of all show that the updates are feasible moves. The constraints $A_Hx'=b$ follow since $\xupd$ is a circulation. The update for potentials can be computed in such a way that the dual constraints, except non-negativity of $s$, are fulfilled as well. For non-negativity of $x$ and $s$, consider the following lemma - in fact the closeness to the central path already implies non-negativity for $x$ and $s$, respectively.
\begin{lemma}
	If $\|\sigma'\|_1 < \delta$, then $x_a + \xupd_a > 0$ and $s_a + \supd_a > 0$ for all $a\in A_H$.
\end{lemma}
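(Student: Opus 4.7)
The hypothesis $\|\sigma'\|_1 < \delta < 1$ immediately gives $|\sigma'_a| < 1$ for every $a \in A_H$, hence $x'_a s'_a = \mu'(1+\sigma'_a) > 0$. So $x'_a$ and $s'_a$ already share a sign and are non-zero, and only the case in which both are strictly negative remains to be excluded.

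I would rule this out by a homotopy argument. Set $(x(\alpha), s(\alpha)) := (x,s) + \alpha(\xupd, \supd)$ for $\alpha \in [0,1]$. If $x'_a < 0$ and $s'_a < 0$ for some $a$, then the two linear functions $x_a(\alpha)$ and $s_a(\alpha)$ each cross zero on $(0,1)$, so the quadratic $\phi_a(\alpha) := x_a(\alpha) s_a(\alpha)$ must satisfy $\phi_a(\alpha) \le 0$ at some $\alpha \in (0,1)$ (it either vanishes at a common zero or is strictly negative between the two individual zeros). It therefore suffices to show $\phi_a(\alpha) > 0$ throughout $[0,1]$. Expanding,
\[
\phi_a(\alpha) = (1-\alpha)\, x_a s_a + \alpha\, x'_a s'_a - \alpha(1-\alpha)\, \xupd_a \supd_a.
\]
The first two terms are bounded below by $(1-\delta)\mu$ and $(1-\delta)\mu'$ respectively, using the input condition $\|\sigma\|_1 \le \delta$ and the hypothesis $\|\sigma'\|_1 < \delta$, so their convex combination is at least $(1-\delta)\min(\mu,\mu') = (1-\delta)\mu'$ (since $\mu$ is a positive integer and $\tau > 0$ force $\mu' = \lceil(1-\tau)\mu\rceil \le \mu$).

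For the cross term, I would invoke the identity $\xupd_a \supd_a = \mu'(\sigma'_a - \devfromone_a)$, which is obtained by subtracting the definitions of $\sigma'_a$ and $\devfromone_a$. Combined with the centering-step invariant $\|\devfromone\|_1 \le \delta/4$, this yields $|\xupd_a \supd_a| \le \mu'(\|\sigma'\|_1 + \|\devfromone\|_1) < (5\delta/4)\mu'$. Since $\alpha(1-\alpha) \le 1/4$ on $[0,1]$, we conclude $\phi_a(\alpha) \ge (1 - \delta - 5\delta/16)\mu' > 0$ for $\delta = 1/8$, which is the desired contradiction.

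The main subtlety is establishing the cross-term identity: $\devfromone$ was designed to measure the first-order residual of the Newton-like update while $\sigma'$ measures the full residual at the new iterate, so their difference isolates precisely the quadratic correction $\xupd_a \supd_a/\mu'$. Everything else is routine arithmetic. One should also note that the bound $\|\devfromone\|_1 \le \delta/4$ is not part of the stated hypothesis but is a design invariant of the centering step, which is the only setting in which this lemma is applied.
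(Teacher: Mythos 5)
Your proposal is correct, but it takes a genuinely different route from the paper. The paper's proof is a two-line algebraic argument: from $\|\sigma'\|_1<\delta<1$ it gets $(x_a+\xupd_a)(s_a+\supd_a)>0$ exactly as you do, and then rules out the ``both nonpositive'' case by evaluating the \emph{linearization} $s_a(x_a+\xupd_a)+x_a(s_a+\supd_a)$: since $x_a,s_a>0$ this quantity would be $\le 0$, yet by the defining (Newton) equation of the update it equals $\newmu+x_as_a>0$ --- a contradiction. (Strictly, with the approximate solve it equals $(1+\devfromone_a)\newmu+x_as_a$, which is still positive whenever $|\devfromone_a|<1$; the paper writes the exact version.) You instead run a homotopy along the segment $(x,s)+\alpha(\xupd,\supd)$, observe that two sign changes would force the product $\phi_a(\alpha)$ to be nonpositive somewhere in $(0,1)$, and then lower-bound $\phi_a$ via the convex-combination identity and the cross-term bound $\xupd_a\supd_a=\newmu(\sigma'_a-\devfromone_a)$ (which is indeed the identity underlying the paper's Lemma~\ref{lem:invarsigma}). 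Both arguments are sound; yours buys a quantitative lower bound on $x_a(\alpha)s_a(\alpha)$ along the entire segment, but at the price of invoking three additional invariants --- centrality $\|\sigma\|_1\le\delta$ of the \emph{old} iterate, the centering-step guarantee $\|\devfromone\|_1\le\delta/4$, and $\delta\le 1/8$ --- none of which the paper's one-liner needs beyond $x_a,s_a>0$ and the update equation. You correctly flag that the $\devfromone$ bound is a design invariant rather than a stated hypothesis; the same caveat applies, in milder form, to the paper's own use of the Newton identity.
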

\begin{proof}
	If $\|\sigma'\|_1 < \delta$, we must have $|(x_a + \xupd_a)(s_a + \supd_a)/{\mu'} - 1| < \delta$ for all $a\in A_H$. Thus $(x_a + \xupd_a) (s_a + \supd_a) > 0$ for all $a\in A_H$. Assume $x_a + \xupd_a \le 0$ and $s_a + \supd_a \le 0$ for some $a\in A_H$. Then
	\begin{align*}
		0 \ge s_a (x_a + \xupd_a) + x_a (s_a + \supd_a) = \mu' + s_a x_a > 0, \quad\text{a contradiction.} \tag*{\qedhere}
	\end{align*}
\end{proof}
In the following lemma we show that, assuming sufficiently large $\beta$ and $\gamma$, a small $\gap(\f,\p):=\f^TR\f -2\p^TA_H\f^0 + \p^TA_HR^{-1}A_H^T\p$ of the electrical flow and voltages yields small one-norm of $\devfromone$.
\begin{lemma}\label{lem:gapclose}
	Assume $\beta\ge \tfrac{2^4m^2}{\delta}$ and $\gamma\ge \tfrac{2^{17}m^5 \beta UC}{\delta}$. Let $\delta\le 1/8$ and $\f\in \RR^{m_H}$ and $\p\in\RR^{n_H}$ such that $\gap(\f,\p)<\tfrac{2^{-8}\delta^2\mu}{m_H}$. Then it holds that $\|\devfromone\|_1\le \tfrac{\delta}{4}$, where $\devfromone:=\frac{1}{\newmu} \left[S\xupd + X\supd +Xs \right] - \ones$.
\end{lemma}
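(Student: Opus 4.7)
My plan is to decompose $\newmu \devfromone$ into three pieces—an electrical-residual term controlled by the gap hypothesis and two rounding errors controlled by the size of $\beta$ and $\gamma$—and bound each separately. I would introduce the rounding remainders $\theta_a := r_a - s_a/x_a \in [0,1)$ (from $r_a = \lceil s_a/x_a\rceil$) and $\zeta_a := \lceil \newmu/s_a\rfloor - \newmu/s_a \in [-\tfrac12, \tfrac12]$, plug $\xupd = \f - \f^0$ and $\supd = -A_H^T\p$ into $\newmu \devfromone_a = s_a\xupd_a + x_a\supd_a + x_a s_a - \newmu$, and use $\f^0_a = x_a - \lceil \newmu/s_a\rfloor$. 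The cross terms collapse to yield
$$
\newmu\, \devfromone_a \;=\; x_a\bigl(r_a \f_a - (A_H^T \p)_a\bigr) \;-\; x_a \theta_a \f_a \;+\; s_a \zeta_a,
$$
so $\newmu \|\devfromone\|_1 \le T_1 + T_2 + T_3$ with $T_1 := \sum_a x_a|r_a \f_a - (A_H^T \p)_a|$, $T_2 := \sum_a x_a|\f_a|$ (using $\theta_a \le 1$), and $T_3 := \tfrac12\sum_a s_a$ (using $|\zeta_a| \le \tfrac12$).

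For $T_1$, since the cycle updates preserve $A_H \f = A_H \f^0$, the standard identity $\gap(\f,\p) = \sum_a (r_a \f_a - (A_H^T \p)_a)^2/r_a$ holds, and Cauchy–Schwarz yields $T_1 \le \sqrt{\gap(\f,\p)\cdot\sum_a x_a^2 r_a}$. Using $x_a r_a \le s_a + x_a$, the central-path bound $\sum x_a s_a \le (1+\delta) m_H \mu$, and the minor bound $x_a \le (1+\delta)\mu m/(\Eps\gamma)$ from~\eqref{form:lowerupperbounds}, the hypothesis on $\gamma$ makes the $\sum x_a^2$ contribution negligible, giving $\sum x_a^2 r_a \le 2 m_H \mu$; combined with the gap bound this yields $T_1 \le \mu \delta/\sqrt{128}$. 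For $T_3$, the minor bound $s_a \le (1+\delta)\mu m/(\Eps \beta)$ gives $T_3 \le m_H(1+\delta)\mu m/(2\Eps \beta)$, which is absorbed by $\beta \ge 2^4 m^2/\delta$.

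The main obstacle is $T_2$, which requires a uniform bound on the energy $\f^T R \f$. The key inequality is $\f^T R \f \le (\f^0)^T R \f^0 + \gap(\f,\p)$, which follows from $\gap(\f,\p^*) = (\f-\f^*)^T R (\f-\f^*)$ (since $\f - \f^* \in \ker A_H$), the inequality $\gap(\f,\p) \ge \gap(\f,\p^*)$ by optimality of $\p^*$, and $(\f^*)^T R \f^* \le (\f^0)^T R \f^0$. To bound the starting energy, I expand $r_a(\f^0_a)^2$, substitute $\f^0_a = (x_a s_a - \newmu)/s_a - \zeta_a$ and $r_a \le s_a/x_a + 1$, and isolate the leading term $(x_a s_a - \newmu)^2/(x_as_a)$. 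Using $\sum_a (x_as_a-\mu)^2 \le \delta^2\mu^2$ (from $\max_a|x_as_a-\mu| \le \sum_a|x_as_a-\mu| \le \delta\mu$) and $m_H(\mu-\newmu)^2 \le 4\delta^2\mu^2$ (from $\tau^2 m \le \delta^2$), this yields $(\f^0)^T R\f^0 \le O(\delta^2\mu)$; the secondary rounding pieces $(s_a/x_a)\zeta_a^2$ and $\theta_a(\f^0_a)^2$ are absorbed by the $\beta$ and $\gamma$ hypotheses. With $\gap \le 2^{-8}\delta^2\mu/m_H$, we obtain $\f^T R \f \le O(\delta^2\mu)$. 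A final Cauchy–Schwarz gives $T_2 \le \sqrt{\sum x_a^2/r_a}\cdot\sqrt{\f^T R\f}$, and the $\gamma$-hypothesis applied to $x_a \le (1+\delta)\mu m/(\Eps\gamma)$ makes $T_2$ small enough that $T_1 + T_2 + T_3 \le \newmu\,\delta/4$.
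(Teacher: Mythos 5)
Your proof is correct and follows essentially the same route as the paper: the identical three-term decomposition of $\newmu\devfromone_a$ into the electrical residual $x_a(r_a\f_a-(A_H^T\p)_a)$ plus the two rounding errors, the same identity $\gap(\f,\p)=\sum_a (r_a\f_a-(A_H^T\p)_a)^2/r_a$ with Cauchy--Schwarz for the first term, and the same reduction of the flow-rounding term to an energy bound on $\f$. The only (harmless) deviations are that you reach $\|\f\|_R^2\le\|\f^0\|_R^2+\gap(\f,\p)$ via the Pythagorean/duality identity rather than the paper's monotonicity of energy under cycle updates, and that you bound $\|\f^0\|_R^2$ more sharply by $O(\delta^2\mu)$ where the paper settles for the cruder $10\mu m_H$, which already suffices.
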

\begin{proof}
	In order to simplify notation, we will use $x$ and $s$ for $M_H x$ and $M_H s$, respectively in this proof, since all consideration are concerned with the minor $H$.
	Let us denote with $D_a:=\lceil \tfrac{\newmu}{s_a}\rfloor - \tfrac{\newmu}{s_a}$ and $E_a := \lceil \tfrac{s_a}{x_a}\rceil - \tfrac{s_a}{x_a}$ the errors introduced by the rounding in the algorithm.
	Then,
	\begin{align}\label{form:epsestimate}
		\begin{split}
		\|\devfromone\|_1
		  & =
		\sum_{a\in A_H}
		\frac{x_a}{\newmu} \left|\frac{s_a}{x_a}\xupd_a +\supd_a + s_a -\frac{\newmu}{x_a} \right|\\
		&=
		\sum_{a\in A_H}
		\frac{x_a}{\newmu} \left|\frac{s_a}{x_a}(\f_a-\f^0_a) - (\p_w-\p_v) + s_a -\frac{\newmu}{x_a} \right|\\
		  & =
		\sum_{a\in A_H}
		\frac{x_a}{\newmu}
		\left| r_a\f_a - (\p_w-\p_v)
		+ \frac{s_a}{x_a}D_a - E_a\f_a
		\right|\\
		&\le
		\frac{\|X(R\f-A^T\p)\|_1}{\newmu}
		+
		\frac{\|s\|_1}{2\newmu}
		+
		\sum_{a\in A_H}\frac{x_a|\f_a|}{\newmu},
		\end{split}
	\end{align}
	since $|E_a|<1$ and $|D_a|\le 1/2$. Moreover, note that using the upper and lower bounds on $x$ and $s$ in~\eqref{form:lowerupperbounds} and the fact that $\tau'\le \tau\le \delta$ yields the following multiplicative errors introduced by the rounding:
	\begin{align}\label{form:multroundingerror}
		r_a=\left\lceil \frac{s_a}{x_a}\right\rceil \in \left[\frac{s_a}{x_a}, 2\frac{s_a}{x_a}\right]\quad\text{and}\quad
		\left\lceil \frac{\mu'}{s_a} \right\rfloor\in\left[\frac{\mu'}{2s_a},\frac{2\mu'}{s_a}\right],
	\end{align}
	where we need that $s_a/x_a\ge 1$, which is guaranteed by $\gamma\ge 2^6 m^3 \beta UC$ and $\beta\ge 4m$.
	Let us now first show how to bound the last summand in \eqref{form:epsestimate}: Note that
	\begin{align*}
		\sum_{a\in A_H}\frac{x_a|\f_a|}{\newmu}
		&\le \frac{1}{\newmu}\sum_{a\in A_H}x_a\sqrt{\frac{x_a}{s_a}}\cdot \sqrt{\frac{s_a}{x_a}}|\f_a|
		\le \frac{1}{\newmu}\sqrt{\sum_{a\in A_H}\frac{x_a^3}{s_a}}
		\sqrt{\sum_{a\in A_H}\frac{s_a}{x_a}|\f_a^2|}\\
		&\le \frac{1}{\newmu}\sqrt{\sum_{a\in A_H}\frac{x_a^3}{s_a}}
		\|\f\|_R.
	\end{align*}
	Since the energy of the initial flow $\f^0$ is at least the energy of $\f$, we obtain
	\begin{align}\label{form:phinormestimate}
		\begin{split}
		\|\f\|_R^2
		  & \le
		\|\f^0\|_R^2
		\le
		\sum_{a\in A_H} \frac{2s_a}{x_a}\left[x_a^2-2 x_a\left\lceil \frac{\newmu}{s_a}\right\rfloor + \left\lceil \frac{\newmu}{s_a}\right\rfloor^2\right]
		\le
		\sum_{a\in A_H}2x_as_a-2\newmu+\frac{8\newmu^2}{x_as_a}\\
		  & \le
		m_H\left[2(1+\delta)\mu-2(1-\taupr)\mu+\frac{8\mu(1-\taupr)^2}{(1-\delta)}\right]
		\le
		\frac{1}{2}\mu+\frac{8\mu(1-\taupr)^2}{(1-\delta)}
		\le 10\mu m_H,
		\end{split}
	\end{align}
	where we used $x_as_a/\mu\ge 1-\delta$, $\delta\le \tfrac{1}{8}$ and $\taupr\le\tau=\delta/\sqrt{m}\le \delta$.
	The upper bound for $x$ and the lower bound for $s$ in~\eqref{form:epsestimate} yield $\sqrt{\sum_{a\in A_H}x_a^3/s_a}/\mu'\le \sqrt{(1+\delta)\mu m^5}/(\eps^3\gamma^2)$, again using $\taupr\le \delta$.
	Putting the two bounds together, we obtain $\sqrt{10(1+\delta)}\mu m^3/(\eps^3\gamma^2)$ as a bound on the third summand in~\eqref{form:epsestimate}.
	Now, for the first summand in~\eqref{form:epsestimate}, note that $\gap(\f,\p) = \sum_{a\in A_H\setminus T}[r_a\f_a - (\p_w -\p_v)]^2/r_a$, as shown in~\cite[Lemma 4.4]{DBLP:conf/stoc/KelnerOSZ13}.
	Hence,
	\begin{align*}
		\frac{\|X(R\f-A^T\p)\|_1}{\mu'}
		=
		\sum_{a\in A_H\setminus T} \frac{|r_a\f_a-(\p_w-\p_v)|}{\sqrt{r_a}}\frac{\sqrt{r_a}x_a}{\mu'}
		\le
		\frac{\sqrt{\gap\|X^2r\|_1}}{\mu'}
		\le \frac{2\sqrt{\gap m_H}}{\sqrt{\mu}}
	\end{align*}
	using the Cauchy-Schwarz-Inequality, $x_as_a/\mu\le 1+\delta$ and again $\taupr\le \delta$.
	Finally, for the second summand in~\eqref{form:epsestimate}, note that $\tfrac{\|s\|_1}{2\mu'}\le \tfrac{(1+\delta)\mu m^2}{2\mu' \Eps \beta}\le \tfrac{m^2}{2\Eps^2\beta}$.
	Hence, plugging all bounds together into~\eqref{form:epsestimate} yields
	\begin{align*}
		\|\devfromone\|_1
		\le
		\frac{2\sqrt{\gap m}}{\sqrt{\mu}}
		+ \frac{m^2}{2\Eps^2\beta}
		+ \frac{\sqrt{10(1+\delta)}\mu m^3}{\Eps^3\gamma^2}
		\le
		\frac{\delta}{8} + \frac{\delta}{16} + \frac{\delta}{16}
		\le
		\frac{\delta}{4},
	\end{align*}
	with $\beta\ge\tfrac{2^4m^2}{\delta}$, $\gamma\ge \tfrac{2^{12} m^4 \beta UC}{\delta}$, $\gap\le \tfrac{2^{-8}\delta^2\mu}{m_H}$ and the upper bound on $\mu\le\mu_0\le 24m\beta\gamma UC$.
\end{proof}
We next show that a small one-norm of $\devfromone$ actually implies the closeness to the central path of the new iterates $x'$ and $s'$, again assuming sufficiently large $\beta$ and $\gamma$.

\begin{lemma}\label{lem:invarsigma}
	Let $\delta\le 1/8$. If $\|\devfromone\|_1\le \frac{\delta}{4}$, then $\|\sigma'\|_1\le \delta$ where $\sigma'_a := \tfrac{x'_as_a'}{\newmu}-1$ for $a\in A_H$.
\end{lemma}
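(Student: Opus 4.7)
The plan is to decompose $\sigma'$ into a linear part controlled by $\eta$ and a quadratic part arising from the omitted product $\Delta x_a \Delta s_a$, and then bound the quadratic part using the fact that $\Delta x$ is a circulation while $\Delta s$ is a potential difference, so $\Delta x^T \Delta s = 0$.

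First I would simply expand
\[
x'_a s'_a = (x_a + \Delta x_a)(s_a + \Delta s_a) = x_a s_a + s_a \Delta x_a + x_a \Delta s_a + \Delta x_a \Delta s_a,
\]
divide by $\mu'$ and subtract $1$, obtaining the identity $\sigma'_a = \eta_a + \Delta x_a \Delta s_a / \mu'$. Hence $\|\sigma'\|_1 \le \|\eta\|_1 + \frac{1}{\mu'} \sum_{a \in A_H} |\Delta x_a \Delta s_a|$, and it remains to show that the quadratic term is at most $3\delta/4$.

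To bound $\sum_a |\Delta x_a \Delta s_a|$, I would introduce the scaled vectors $u_a := \sqrt{s_a/x_a}\,\Delta x_a$ and $v_a := \sqrt{x_a/s_a}\,\Delta s_a$, so that $u_a v_a = \Delta x_a \Delta s_a$. Since $\Delta x$ is a circulation in $H$ (i.e.\ $A_H \Delta x = 0$) and $\Delta s = -A_H^T \Delta \pi$ lies in the range of $A_H^T$, we have $\Delta x^T \Delta s = 0$, and therefore $u^T v = 0$ and $\|u+v\|_2^2 = \|u\|_2^2 + \|v\|_2^2$. Combining the elementary inequality $|u_a v_a| \le (u_a^2 + v_a^2)/2$ with this orthogonality yields
\[
\sum_{a \in A_H} |\Delta x_a \Delta s_a| \;\le\; \tfrac{1}{2}\|u+v\|_2^2 \;=\; \tfrac{1}{2}\sum_{a\in A_H} \frac{(s_a \Delta x_a + x_a \Delta s_a)^2}{x_a s_a}.
\]
Using the defining relation $s_a \Delta x_a + x_a \Delta s_a = \mu' \eta_a + \mu' - x_a s_a = \mu' \eta_a - \mu \sigma_a - (\mu - \mu')$, and the lower bound $x_a s_a \ge (1-\delta)\mu$ that follows from $\|\sigma\|_1 \le \delta$, the right-hand side is at most $\tfrac{1}{2(1-\delta)\mu}\sum_a (\mu'\eta_a - \mu\sigma_a - (\mu - \mu'))^2$.

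From here I would use $(a+b+c)^2 \le 3(a^2 + b^2 + c^2)$ and the standard estimates $\|\eta\|_2^2 \le \|\eta\|_1 \|\eta\|_\infty \le \|\eta\|_1^2 \le \delta^2/16$, $\|\sigma\|_2^2 \le \|\sigma\|_1^2 \le \delta^2$, together with $|\mu - \mu'| \le \tau \mu = \delta \mu/\sqrt{m}$ (so that $m_H(\mu - \mu')^2 \le \delta^2 \mu^2$) and $\mu' \le \mu$, to obtain
\[
\frac{1}{\mu'} \sum_{a\in A_H} |\Delta x_a \Delta s_a| \;\le\; \frac{3}{2(1-\delta)} \cdot \frac{\mu}{\mu'}\Bigl( \tfrac{\delta^2}{16} + \delta^2 + \delta^2 \Bigr).
\]
For $\delta \le 1/8$ and $\tau = \delta/\sqrt m$ (so $\mu/\mu' \le 1/(1-\tau)$ is close to $1$), a direct numerical check shows this is at most $3\delta/4$. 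Combined with $\|\eta\|_1 \le \delta/4$, this gives $\|\sigma'\|_1 \le \delta$ as required. The main obstacle is book-keeping the three sources of error (the $\eta$-error from the approximate solve, the $\sigma$-error from being off-center in the previous iterate, and the integer rounding from $\mu' = \lceil(1-\tau)\mu\rceil$) and verifying that for the chosen $\delta \le 1/8$ the constants combine favourably; the orthogonality $\Delta x^T \Delta s = 0$ is what makes the quadratic term second-order in these quantities rather than first-order, which is why the step stays close to the central path.
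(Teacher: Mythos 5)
Your proposal is correct and follows essentially the same route as the paper: the decomposition $\sigma'_a = \devfromone_a + \xupd_a\supd_a/\newmu$, the use of the orthogonality $\xupd^T\supd = 0$ to bound the quadratic term by $\sum_a (s_a\xupd_a + x_a\supd_a)^2/(x_as_a)$, and the final estimate in terms of $\|\devfromone\|_1$, $\|\sigma\|_1$ and $\taupr$. The only differences are cosmetic: you invoke the standard scaled-vector form of the orthogonality bound (gaining a harmless factor $\tfrac12$) and use $(a+b+c)^2\le 3(a^2+b^2+c^2)$ where the paper expands the square and applies Cauchy--Schwarz to the cross terms; the constants check out in both versions for $\delta\le 1/8$.
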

\begin{proof}
	By definition of $\devfromone$, it holds that $(1+\devfromone_a)\newmu = \xupd_a s_a + x_a \supd_a + x_as_a$, which yields
	\begin{align}\label{eq:xupdsupd}
		\xupd_a\supd_a
		= \tfrac{1}{2x_as_a}\left[ ((1+\devfromone_a)\newmu - x_as_a)^2 - (\xupd_as_a)^2 -(x_a\supd_a)^2\right] \quad\text{for }a\in A_H.
	\end{align}
	For the one-norm of $\sigma'$, we get the following estimate
	\begin{align*}
		\|\sigma'\|_1
		= \sum_{a\in A_H} \left| \frac{x'_as'_a}{\newmu} - 1 \right|
		= \sum_{a\in A_H} \left| \frac{\xupd_a\supd_a}{\newmu} + \devfromone_a \right|
		\le
		\sum_{a\in A_H} \left| \frac{\xupd_a\supd_a}{\newmu} \right| + \|\devfromone \|_1.
	\end{align*}
	Using~\eqref{eq:xupdsupd} and the triangle inequality, we obtain for the first summand that
	\begin{align*}
		\sum_{a\in A_H} \left| \frac{\xupd_a\supd_a}{\newmu} \right|
		  & \le
		\sum_{a\in A_H} \frac{[(1+\devfromone_a)\newmu-x_as_a]^2}{2x_as_a\newmu} + \frac{(\xupd_as_a)^2 + (x_a\supd_a)^2}{2x_as_a\newmu}\\
		&=
		\sum_{a\in A_H} \frac{[(1+\devfromone_a)\newmu-x_as_a]^2}{x_as_a\newmu},
	\end{align*}
	because $\xupd$ and $\supd$ are orthogonal and thus the sum over all arcs in~\eqref{eq:xupdsupd} yields the last equality.
	Now let $\taupr$ be such that $\mu'=(1-\taupr)\mu$. Factoring out $\mu$, plugging in the definition of $\newmu$ and $\sigma_a$ and using the lower bound on $\tfrac{x_as_a}{\mu}$ yields
	\begin{align*}
		\sum_{a\in A_H} \left| \frac{\xupd_a\supd_a}{\newmu} \right|
		  & \le
		\sum_{a\in A_H} \frac{\mu^2\big(\frac{(1+\devfromone_a)\newmu}{\mu}-\frac{x_as_a}{\mu}\big)^2}{x_as_a\newmu}
		\le
		\sum_{a\in A_H} \frac{[(1-\taupr)(1+\devfromone_a)- 1 - \sigma_a]^2}{(1-\taupr)(1-\delta)}.
	\end{align*}
	Expanding the square, re-grouping the terms and using $\ones^T\sigma\le\|\sigma\|_1\le \delta$, $-\ones^T\devfromone\le \|\devfromone\|_1$, the equivalence of norms and the Cauchy-Schwarz-Inequality for $-\devfromone^T\sigma\le\|\devfromone\|_2\|\sigma\|_2$, yields
	\begin{align*}
		\sum_{a\in A_H} \left| \frac{\xupd_a\supd_a}{\newmu} \right|
		  & \le
		\frac{(1-\taupr)^2 \|\devfromone\|_1^2 + [2(1-\taupr)\taupr +2\delta(1-\taupr)]\|\devfromone\|_1 + \delta^2 + 2\delta \taupr + \taupr^2m}{(1-\taupr)(1-\delta)}.
	\end{align*}
	Using $\taupr\le \tau=\tfrac{\delta}{\sqrt{m}}\le\delta$ and $1-\taupr\le 1$ yields
	\begin{align*}
		\|\sigma'\|_1
		\le
		\frac{\|\devfromone\|_1^2 + 4\delta\|\devfromone\|_1 + 4\delta^2 + (1-\taupr)(1-\delta)\|\devfromone\|_1}{(1-\taupr)(1-\delta)}
		\le
		\frac{\|\devfromone\|_1^2 + (4\delta+1)\|\devfromone\|_1 + 4\delta^2}{(1-\delta)^2}
	\end{align*}
	and $\|\devfromone\|_1\le \tfrac{\delta}{4}$ gives the result for any $\delta\le \tfrac{1}{8}$.
\end{proof}
It remains to argue the convergence behavior of \texttt{centering\_step}. The following lemma can be seen as the equivalence of \cite[Lemma 4.5]{DBLP:conf/stoc/KelnerOSZ13}. We obtain an additional factor of $3/4$ due to the rounding to the nearest integer of $\alpha$. We can show that assuming large enough $\beta$ and $\gamma$, the rounding does not impede the convergence speed significantly.
\providecommand{\Deltac}{\Lambda}
\begin{lemma}
	The expected decrease in primal energy $\|\f\|_R^2$ for every cycle-update in the inner while-loop of Algorithm~\ref{alg:pathfollow} is at least $\tfrac{3\gap}{4\tcn(T)}$, provided that $\beta\ge 2^5m^{3}/\delta$, $\gamma\ge 2^6m^3\beta UC$ and $\gap\ge 2^{-8}\delta^2\mu/m_H$.
\end{lemma}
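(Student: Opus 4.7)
The plan is to imitate the electrical-flow convergence argument of Kelner et al.\ \cite[Lemma~4.5]{DBLP:conf/stoc/KelnerOSZ13} while carefully tracking the error introduced by the nearest-integer rounding in the definition of $\alpha$. Let $c_a\in\{-1,0,1\}^{m_H}$ denote the signed indicator of the fundamental cycle $C_a$ and set $L_a := \f^T R c_a$. Then $\|c_a\|_R^2 = r(C_a)$, the (unrounded) energy-minimizing update along $c_a$ is $\alpha^*_a := -L_a/r(C_a)$, and a direct expansion shows that replacing $\f$ by $\f + \alpha c_a$ for any scalar $\alpha = \alpha^*_a + \epsilon$ reduces $\|\f\|_R^2$ by exactly $r(C_a)\bigl((\alpha^*_a)^2 - \epsilon^2\bigr)$. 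Since the algorithm sets $\alpha = \lceil \alpha^*_a\rfloor$, we have $|\epsilon|\le \tfrac12$ and hence a per-iteration decrease of at least $L_a^2/r(C_a) - r(C_a)/4$.

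Taking expectation over the sampling distribution $p_a = r(C_a)/(\tcn(T)\, r_a)$, the first summand evaluates to $\sum_{a\in A_H\setminus T} p_a\, L_a^2/r(C_a) = \tcn(T)^{-1}\sum_{a\notin T} L_a^2/r_a = \gap(\f,\p)/\tcn(T)$, using the identification $L_a = r_a\f_a - (\p_w-\p_v)$ for tree-induced voltages (so that $L_a = 0$ on $T$) together with the reformulation of $\gap$ from \cite[Lemma~4.4]{DBLP:conf/stoc/KelnerOSZ13}. It therefore suffices to show that the expected rounding loss $\tfrac14\,\mathbb{E}[r(C_a)]$ is bounded by $\gap/(4\tcn(T))$, which then produces the advertised factor~$\tfrac34$.

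For this, the same type of manipulation gives $\mathbb{E}[r(C_a)] = \tcn(T)^{-1}\sum_{a\notin T} r(C_a)^2/r_a \le R_{\max}$ with $R_{\max} := \max_{a\notin T} r(C_a)$, where the last inequality uses $\sum_{a\notin T} r(C_a)/r_a = \tcn(T)$. Because $\gamma\ge 2^6 m^3\beta UC$ forces $s_a/x_a \ge 1$, the multiplicative estimate~\eqref{form:multroundingerror} combined with~\eqref{form:lowerupperbounds} yields $r_a \le 2s_a/x_a \le 2(1+\delta)\mu m^2/(\Eps^2\beta^2)$, so $R_{\max} \le 2(1+\delta) m_H \mu m^2/(\Eps^2\beta^2)$. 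Combined with the low-stretch bound $\tcn(T) = O(m\log n\log\log n)$ of \cite{DBLP:conf/stoc/AbrahamN12} and the hypothesis $\beta \ge 2^5 m^3/\delta$, this gives $R_{\max}\cdot\tcn(T) \le 2^{-8}\delta^2\mu/m_H$, which by the third hypothesis $\gap\ge 2^{-8}\delta^2\mu/m_H$ is at most $\gap$. The main obstacle is precisely this constant-chasing: verifying that the stated lower bounds on $\beta$ and $\gamma$ are strong enough to guarantee $R_{\max}\cdot\tcn(T)\le \gap$, so that the rounding loss devours at most one quarter of the ideal expected decrease $\gap/\tcn(T)$, while the rest of the argument is essentially the Kelner et al.\ recipe applied to a rounded update.
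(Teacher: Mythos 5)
Your proof is correct and follows essentially the same route as the paper: expand the energy change of the rounded cycle update, take the expectation under $p_a = r(C_a)/(\tcn(T)r_a)$ to extract the ideal decrease $\gap/\tcn(T)$, bound the rounding loss by $\tfrac14\max_a r(C_a)$ (the paper uses the marginally cruder $\tfrac14\|r\|_1$ together with $\tcn(T)\le m^2$), and use the hypotheses on $\beta$, $\gamma$ and the lower bound on $\gap$ to absorb that loss into a quarter of the ideal decrease. The only point to be aware of is that both your argument and the paper's silently need $\tcn(T)\le m^2$, which holds for the low-stretch tree only once $m$ dominates the hidden constant in $O(m\log n\log\log n)$.
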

\begin{proof}
	Let $\Deltac_a := \sum_{a'\in C_a} r_{a'} \f_{a'} = r_a \f_a - (\p_w-\p_v)$.
	In every iteration of the inner while-loop an arc $a$ is sampled and $\f$ is updated to $\f' = \f - \alpha_a \theta_a$, denoting $\alpha_a := \lceil \tfrac{ \Deltac_a }{r(C_a)} \rfloor$ and $\theta_a$ the indicator vector corresponding to $C_a$.
	The change in energy due to this update is $ \|\f - \alpha_a \theta_a \|_R^2 - \|\f\|_R^2 = - 2\alpha_a \Deltac_a + \alpha^2 r(C_a)$. Thus, using the definition of the probabilities $p_a=\tfrac{r(C_a)}{\tcn(T)r_a}$ for $a\in A_H\setminus T$, the expected decrease in energy is
	\begin{align*}
		\Exp \left[ \|\f'\|_R^2 - \|\f\|_R^2 \right]
		=
		\sum_{A_H\setminus T}
		\frac{r(C_a)^2 \alpha_a^2}{\tcn(T)r_a}
		-
		\frac{2r(C_a)\alpha_a \Deltac_a}{\tcn(T)r_a}
		=
		\sum_{A_H\setminus T}
		\frac{\big[\Deltac_a - \alpha_a r(C_a)\big]^2 - {\Deltac_a}^2}{\tcn(T)r_a}.
	\end{align*}
	Now, using $\gap=\sum_{A_H\setminus T} {\Deltac_a}^2/r_a$ and $r(C_a)\le \|r\|_1$ for all $a\in A_H\setminus T$ for the right term in the enumerator and $|\Deltac_a - \alpha_a r(C_a)|\le r(C_a)/2$ as the rounding error for the left term, yields
	\[
		\Exp \left[ \|\f'\|_R^2 - \|\f\|_R^2 \right]
		\le
		\frac{1}{4\tcn(T)} \|r\|_1 \tcn(T)
		- \frac{\gap}{\tcn(T)}
		\le
		\frac{1}{4\tcn(T)} \|r\|_1 m^2
		- \frac{\gap}{\tcn(T)}.
	\]
	Using $r_a\le 2s_a/x_a$, which follows from $\gamma\ge 2^6m^3\beta UC$, see~\eqref{form:multroundingerror} and the upper and lower bounds on $x,s$ in~\eqref{form:lowerupperbounds}, yields
	\[
		\|r\|_{1}m^2
		\le
		\frac{2(1+\delta)m^5\mu}{\Eps^2\beta^2}
		\le
		\frac{2(1+\delta)m^5\mu}{\Eps^2}\frac{\delta^2}{2^{10}m^6}
		\le
		\frac{(1+\delta)}{2\Eps^2}
		\frac{\delta^2\mu}{2^{8} m}
		\le
		\frac{\delta^2\mu}{2^{8}m}
		\le
		\gap,
	\]
	where we used $\beta\ge 2^5m^3/\delta$.
	This yields
	$
	\Exp \left[ \|\f'\|_R^2 - \|\f\|_R^2 \right]
	\le
	\tfrac{-3\gap}{4\tcn(T)}
	$.
\end{proof}
The above lemma yields a corresponding result to~\cite[Theorem 3.2]{DBLP:conf/stoc/KelnerOSZ13}, that is an expected guarantee on the approximation of the primal and dual energy. However, as Kelner et al.\ state ``one can use Markov bound or Chernoff bound to prove a probabilistic but exact guarantee''. In fact, the near-linear run-time of the \texttt{centering\_step} follows as in the proof of Theorem 2 in~\cite{DBLP:conf/isaac/BeckerK14}.

\section{Conclusion}\label{sec:conclusion}
We conclude by deducing the precise bound on the size of the numbers that the algorithm has to deal with: Note that the voltages, the values of the $\alpha$, as well as the currents $\f$ are each bounded in absolute value by $\|R\f\|_1$. By using the Cauchy-Schwarz-Inequality, we obtain $\|R\f\|_1\le \sqrt{\|r\|_1}\cdot\|\f\|_R$.
The first factor can be bounded by $\sqrt{2(1+\delta)\mu_0 m^3}/\eps\beta$ using the bounds from~\eqref{form:multroundingerror} and~\eqref{form:lowerupperbounds}.
The second, analogously to the estimation in~\eqref{form:phinormestimate}, can be bounded by $\sqrt{10\mu m}$. Together with $\mu\le \mu_0=3m\beta\gamma UC/\delta$, which is shown in Appendix~\ref{sec:initial} and using $\delta=1/8$, we obtain $2^8 m^{3} \gamma UC$ as the final bound on all numbers appearing in the centering step.
The values of $x$ and $s$ are bounded by $(1+\delta)\mu m/(\eps\gamma)$ and $(1+\delta)\mu m/(\eps\beta)$, respectively, which yields the bounds of $2^6m^2\beta UC$ and $2^6m^2\gamma UC$.
This shows Theorem~\ref{thm:proxynumbersize}. Finally, an arbitary given min-cost flow instance can first be scaled down, i.e., divided by its gcd, and then scaled up again with $\beta$ and $\gamma$ of size as described in Theorem~\ref{thm:proxynumbersize}. It follows that we can always guarantee the bound on the size of numbers in Theorem~\ref{thm:contribution}.

\newpage
\bibliography{refs}

\newpage
\appendix{
	\section{Appendix: Finding Arbitrarily Central Points}\label{sec:initial}
	In this section, we show how to construct an auxiliary instance and corresponding feasible interior solutions that are arbitrarily close to the central path. In fact, we can use the same technique as used in~\cite{DBLP:conf/isaac/BeckerK14} for a potential reduction method. Let us denote with $(G^0,b^0,u,c^0)$ the given input instance. Let $V^0$ denote the node set of $G^0$ and let $A^0$ denote its arc set. The method constructs an auxiliary instance $(G,b,c)$ that is an un-capacitated min-cost flow problem and corresponding primal and dual interior solutions for it. We denote with $V$ and $A$ the node and arc set of $G$, respectively, and with $n=|V|$ and $m=|A|$ their respective cardinalities.
	The first part of the construction is illustrated in Figure~\ref{fig:remcap} from left to middle. For each $a=(v,w)\in A^0$, insert the two nodes $v,w$ into $G$ as well as a new node $vw$ together with arcs $\acute{a}=(v,vw)$ and $\grave{a}=(w,vw)$, read ``\emph{$a$ up}'' and ``\emph{$a$ down}''. Define the new costs by $c_{\acute{a}} := c_a$ and $c_{\grave{a}} :=0$.
	The new demand vector is given by $b_v := b^0_v - u(\din(v))$ for nodes $v\in V^0$ and $b_{vw}:=u_{(v,w)}$ for the newly inserted nodes. A flow of $f$ on $a\in G^0$ corresponds to a flow of $f$ on $\acute a=(v,vw)$ and a flow of $u_a-f$ on $\grave a = (w,vw)$ in $G$. This is a well-known construction and it is known that the new instance is equivalent to the old instance, see for example~\cite[Section 2.4]{DBLP:books/daglib/0069809}.
	\begin{figure}[ht!]
		\begin{center}
			\resizebox{!}{3.8cm}{
				\begin{tikzpicture}[shorten >=1pt,node distance=2cm,>=stealth',initial/.style={}]
\begin{scope}
\tikzstyle{every state}=[draw=blue!50,very thick,fill=blue!20]
  \node[state, color = white]			(vw) 			{\Large{$vw$}};
  \node[state]          	(w) [above left =of vw]	{\Large{$w$}};
  \node[state]          	(v) [below left =of vw]	{\Large{$v$}};
\tikzset{mystyle/.style={orange}}

\path[->]	(v) edge 		node[left] {\Large{$(u^0_a,c^0_a)$}} (w);
\path[->]		(w) edge[mystyle] 	node[above] {\Large{$b^0_w$}}  ++ (-1.5,0);
\path[->]		(v) edge[mystyle] 	node[above] {\Large{$b^0_v$}}  ++ (-1.5,0);
\end{scope}

\begin{scope}[xshift=4cm]
\tikzstyle{every state}=[draw=blue!50,very thick,fill=blue!20]
  \node[state]			(vw) 			{\Large{$vw$}};
  \node[state,color=white]	(a) [left =of vw]	{};
  \node[state]          	(w) [above left =of vw]	{\Large{$w$}};
  \node[state]          	(v) [below left =of vw]	{\Large{$v$}};
\tikzset{mystyle/.style={orange}}
\path[->]	(v) edge 		node[below right] {\Large{$c^0_a$}} (vw);
\path[->]	(w) edge 		node[above right] {\Large{$0$}} (vw);
\path[->]		(v) edge[mystyle] 	node[above] {\Large{$b^0_v$}}  ++ (-1.9,0);
\path[->]		(w) edge[mystyle] 	node[above] {}  ++ (-1.9,0);
\node[color=orange] (d) at (-3.17,2.4) {\Large{$b^0_w - u^0_a$}};
\path[->]		(vw) edge[mystyle] 	node[above] {\Large{$u^0_{a}$}}  ++ (1.7,0);
\end{scope}

\begin{scope}[xshift=10 cm]
\tikzstyle{every state}=[draw=blue!50,very thick,fill=blue!20]
  \node[state]			(vw) 			{\Large{$vw$}};
  \node[state]          	(w) [above left =of vw]	{\Large{$w$}};
  \node[state]          	(v) [below left =of vw]	{\Large{$v$}};
\tikzset{mystyle/.style={orange}}
\path[->]	(v) edge 		node[below right] {\Large{$c^0_{a}$}} (vw);
\path[->]	(w) edge 		node[above right] {\Large{$0$}} (vw);
\path[->]	(v) edge 		node[left] {\Large{$c^1_{\hat a}$}} (w);
\path[->]		(w) edge[mystyle] 	node[above] {}  ++ (-1.9,0);
\node[color=orange] (d) at (-3.17,2.4) {\Large{$b^0_w - u^0_a$}};
\path[->]		(v) edge[mystyle] 	node[above] {\Large{$b^0_v$}}  ++ (-1.9,0);
\path[->]		(vw) edge[mystyle] 	node[above] {\Large{$u^0_a$}}  ++ (1.5,0);
\end{scope}
\end{tikzpicture}
			}
		\end{center}
		\caption{
			The transition from the left to the middle, which is done for each arc, removes the
			capacity constraint. From the middle to the right: In order to balance the $x_as_a$,
			we introduce the arc $\hat a = (v,w)$ with high cost $c_{\hat a}$ and reroute flow
			along it. The direction of $\hat a$ depends on a tree solution $z$ in the original graph. It is flipped if $z_a \le u_a / 2$. All arcs in the middle and the right graph have infinite capacity. Only the costs are shown.}
		\label{fig:remcap}
	\end{figure}
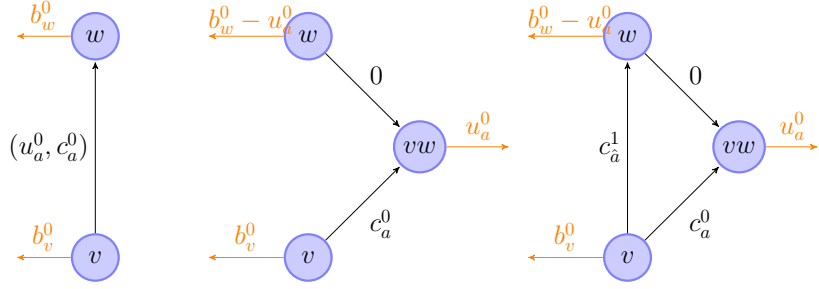

	The goal is to find values $x,y,s$ that are primal and dual feasible, integral, as well as close to the central path.
	The latter means that we want that $\sum_{a\in A}|\tfrac{x_as_a}{\mu_0}-1|\le \delta$ for some parameters $\mu_0>0$ and $\delta\ge 0$. We show that the technique from~\cite{DBLP:conf/isaac/BeckerK14} can actually achieve this. With a different choice of parameters, the technique yields closeness to the central path for arbitrary $\delta$.

	The second part of the construction starts by computing an integral (not necessarily feasible) tree solution $z$ in $G^0$ for an arbitrary spanning tree $T$, set $x_{\grave a}=x_{\acute a}=u_a/2$ and introduce the additional arc $\hat a =\sign(z_a - u_a/2) (v,w)$. If $z_a-u_a/2=0$, do not introduce the arc $\hat a$. Set the flow on $\hat a$ to $x_{\hat a}=|z_a - u_a/2|$ and the cost to $c_{\hat a}=\lceil t/|z_a-u_a/2|\rceil$.
	A sufficiently large choice of $t$ guarantees that no flow is routed across $\hat a$ in an optimal solution. More precisely, for $t\ge 2\beta \gamma m UC$, we obtain $c_{\hat a}\ge t/(2\beta U)\ge m \gamma C\ge \ones^Tc^0$, where we used that $z_a\le \|b^0\|_1/2$. It follows that the cost of $\hat a$ is at least the cost of any path in $G^0$ and thus introducing $\hat a$ does not change the optimum of the problem.
	The dual variables are set as $y_v,y_w = 0$ and $y_{vw} = -2t/u_a$, which determines $s_{\acute a} = c_a +2t/u_a$, $s_{\grave a} = 2t/u_a$ and $s_{\hat a} = c_{\hat a} = \lceil t/|z_a-u_a/2|\rceil$.
	We can now show that $x,y,s$ are arbitrarily close to the central path.
	\begin{lemma}\label{lem:init}
		Let $x,y,s$ as described above.
		\begin{enumerate}
			\item For any $t> 0$, it holds that $x_a s_a \in [t,t+ \beta\gamma UC]$ for all $ a\in A$.
			\item Let $t\ge \beta\gamma UC (\tfrac{m}{\delta}-1)$ and $\mu_0:=t+\beta\gamma UC$, then $\|\sigma\|_1 \le \delta$, where $\sigma_a = \tfrac{x_as_a}{\mu_0}-1$.
		\end{enumerate}
	\end{lemma}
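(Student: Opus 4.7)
The proof is a direct computation organized by the three arc types in the auxiliary graph. My plan is to first establish part~1 by substituting the explicit definitions of $x$ and $s$ into $x_a s_a$ case-by-case, and then derive part~2 as a short algebraic consequence by summing $|\sigma_a|$ over all arcs and using the hypothesis on $t$.

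For part~1, I would handle each arc type separately. The arcs $\grave a = (w,vw)$ give the cleanest case: $x_{\grave a} s_{\grave a} = (u_a/2)\cdot(2t/u_a) = t$ exactly, so both bounds hold trivially. For $\acute a = (v,vw)$, we get $x_{\acute a} s_{\acute a} = (u_a/2)(c_a + 2t/u_a) = t + u_a c_a/2$; using $u_a \le U\beta$ and $c^0_a \le C\gamma$ directly gives $x_{\acute a} s_{\acute a} \in [t,\, t + \beta\gamma UC/2]$. The arc $\hat a$ is the main obstacle, because its flow and cost are defined through the rounded quantity $\lceil t/|z_a - u_a/2|\rfloor$ derived from the integral tree solution $z$ of $G^0$. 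Writing $d := |z_a - u_a/2|$, one has $x_{\hat a} s_{\hat a} = d \cdot \lceil t/d \rfloor$, and since $\lceil t/d \rfloor \in [t/d,\, t/d + 1)$ this product lies in $[t,\, t + d)$. To bound $d$, I combine the triangle inequality with $u_a/2 \le \|u\|_\infty/2 \le U\beta/2$ and the standard tree-solution estimate $|z_a| \le \|b^0\|_1/2 \le U\beta$ that follows directly from the definition of $U$; this yields $d \le \tfrac{3}{2}U\beta \le \beta\gamma UC$ using integrality of the costs. This is the step where the specific choice of $U$ matters, and it is essentially the only non-mechanical part of the argument.

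For part~2, once $x_a s_a \in [t,\, t + \beta\gamma UC]$ is established and $\mu_0 := t + \beta\gamma UC$, the definition of $\sigma_a$ immediately gives $\sigma_a \in [t/\mu_0 - 1,\, 0]$, so $|\sigma_a| \le \beta\gamma UC/\mu_0$ for every arc. Summing yields $\|\sigma\|_1 \le m\beta\gamma UC/\mu_0$. A one-line rearrangement shows that $m\beta\gamma UC/(t + \beta\gamma UC) \le \delta$ is equivalent to $t \ge \beta\gamma UC(m/\delta - 1)$, which is exactly the hypothesis, finishing the proof. No additional machinery is needed beyond the constructions already in place; the entire argument is arithmetic once the three cases are set up correctly.
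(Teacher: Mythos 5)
Your proof is correct and follows essentially the same route as the paper's: a case analysis over the three arc types $\acute a$, $\grave a$, $\hat a$ for part~1 (with $x_{\grave a}s_{\grave a}=t$ exactly, $x_{\acute a}s_{\acute a}=t+u_ac_a/2$, and $x_{\hat a}s_{\hat a}\in[t,t+d)$ for $d=|z_a-u_a/2|$), followed by the identical one-line summation argument for part~2. The only difference is cosmetic: your triangle-inequality bound $d\le\tfrac32\beta U$ is slightly looser than the paper's $d\le\beta U$, but both are absorbed into the same stated interval $[t,t+\beta\gamma UC]$.
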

	\begin{proof}
		\begin{enumerate}
			\item Let $a\in A_0$ be any arc in $G_0$, then $x_{\vvw a} = x_{\wvw a} = u_a/2$. It follows that,
			      \begin{align*}
			      	  & x_{\vvw a} s_{\vvw a} = \frac{u_a}{2} \left(c_{\vvw a} + \frac{2t}{u_a}\right) = t + \frac{ u_a c_a}{2} \le t + \frac{\beta\gamma UC}{2},
			      	\quad
			      	x_{\wvw a} s_{\wvw a} = \frac{u_a}{2} \left(c_{\wvw a} + \frac{2t}{u_a} \right)=t
			      	\quad \text{and}\quad\\
			      	  & x_{\hat a} s_{\hat a}
			      	\ge
			      	\Big|z_a - \frac{u_a}{2}\Big| \frac{t}{|z_a - \frac{u_a}{2}|} = t
			      	\quad \text{and} \quad
			      	x_{\hat a} s_{\hat a}
			      	\le
			      	\Big|z_a - \frac{u_a}{2}\Big| \left(\frac{t}{|z_a - \frac{u_a}{2}|} + 1 \right)
			      	\le
			      	t + \beta U.
			      \end{align*}
			\item
			      From the first part, it follows that $x_a s_a \in [t,t+ \beta\gamma UC]$ for $a\in A$ or equivalently $\frac{x_as_a}{\mu_0}\in[t/(t+\beta\gamma UC),1]=[1 - \beta\gamma UC/(t+\beta\gamma UC),1]$. This yields $\frac{x_as_a}{\mu_0}\ge 1 - \delta/m$ with the lower bound on $t$ and hence
			      \[
			      	\|\sigma\|_1
			      	=
			      	\sum_{a\in A} \left|\frac{x_as_a}{\mu_0}-1\right|
			      	\le
			      	\sum_{a\in A}\frac{\delta}{m}=\delta. \tag*{\qedhere}
			      \]
		\end{enumerate}
	\end{proof}
	From the above lemma and the lower bound on $t$ that ensures that the optimum is not changed due to the introduction of the new arcs $\hat a$, we obtain that choosing $\mu_0=3m_0\beta\gamma UC/\delta$ is sufficient. Choosing $\delta=1/8$ implies that setting $\mu_0=24m\beta\gamma UC$ is sufficient.
}
\end{document}